\documentclass[12pt,a4paper]{article}
\usepackage{amsmath,amstext,amssymb,amscd, color}
\usepackage{graphicx}

\usepackage{color,xcolor,mathdots}
\usepackage[centertableaux]{ytableau}
\usepackage[all]{xy}
\usepackage[margin=1in]{geometry}
\usepackage{tikz,tikz-cd}
\usetikzlibrary{shapes,arrows,svg.path}
\usepackage{booktabs}

\usepackage[colorlinks=true, pdfstartview=FitV, linkcolor=blue, citecolor=blue, urlcolor=blue]{hyperref}	

\newtheorem{thm}{Theorem}[section]
\newtheorem{prop}[thm]{Proposition}

\newtheorem{dfn}[thm]{Definition}
\newtheorem{rem}[thm]{Remark}
\newtheorem{ex}[thm]{Example}
\numberwithin{equation}{section}

\title{Hodge numbers for orbifolds of Calabi-Yau threefolds Fermat type and the Roan pairs}
\author{Sergei  Aleshin\thanks{
Institute for Information Transmission Problems, {19, Bol'shoi Karetnii}, {127051}, {Moscow}, {Russia, aless2001@mail.ru
}}, Alexander  Belavin\thanks{Landau Institute for Theoretical Physics, 142432 Chernogolovka, Russia, sashabelavin@gmail.com} and Gleb Koshevoy\thanks{Institute for Information Transmission Problems, {19, Bol'shoi Karetnii}, {127051}, {Moscow}, {Russia, koshevoyga@gmail.com
}} 
}

\begin{document}
\maketitle

\begin{abstract}
First, for orbifolds of Calabi-Yau threefolds of Fermat type,  we define Roan's Hodge numbers. We prove, Theorem \ref{main} , that for all orbifols of Calabi-Yau threefolds Fermat type,  Roan's Hodge numbers correctly count the stringy Euler numbers due to Vafa formula.
Second,
  for Calabi-Yau threefolds Fermat type, we apply Roan's Hodge numbers  to get a relation between the Borcea-Voisin construction \cite{Borcea, Voisin}  and Berglund-H\"ubsch-Krawits mirror symmetry (BHK mirror symmetry).
   
 Third,
   for orbifolds of K3 surfaces Fermat type,  we establish relations twisted and untwisted parts of the second mixed cohomology to deformations and the Roan pairs.  This allow us to confirm the BHK mirror symmetry for such orbifolds, since the Euler numbers computed by the Vafa formula are
   equal $24$ for all of them.  

\end{abstract}

\section{Introduction}

 Calabi-Yau (CY) manifolds/orbifolds have always occupied a special place in geometry
and physics. The combinatorial nature of these objects make them a fertile ground to test
new ideas and techniques.

{
For physics Calabi-Yau threefolds are of main interest in relations to string theory}. A classical result \cite{C-W} states that compactification of $E_8\times E_8$ heterotic string on a Calabi-Yau threefold $X$ yield a four-dimensional $\mathcal N=1$ effective theory with%

\begin{enumerate}

    \item gauge group $E_6\times E_8$ (for the standard embedding);
    \item complex structure moduli $h^{2,1}(X)$  of chiral supermultiplets in the $\mathbf{27}$ of $E_6$ and K\"ahler moduli $h^{1,1}(X)$ vector-like supermultiplets in $\overline{\mathbf{27}}$;
    \item $|\chi(X)/2|$ is number of generations, chiral matter supermultiplets.
    
\end{enumerate}

In \cite{Belavin} it was proposed the  free fields construction to heterotic string on orbifolds of  Calabi-Yau threefolds of Berglund-H\"ubsch type using 
combinatorial  Borisov's approach to the vertex algebra of  $\mathcal N=2$ super Virasoro.

{

The Vafa formula provides the mathematically correct invariant orbifolds, stringy Euler numbers, compatible with string theory. Stringy Hodge numbers split into untwisted sector (Standard Cohomology) and twisted sectors (Singular Loci States), and Mirror symmetry swap stringy Hodge numbers.

There are twofolds of our paper. 

First, for orbifolds of Calabi-Yau threefolds of Fermat type,  we define Roan's Hodge numbers. Namely,  for the untwisted sector, we leave standard cohomology and they  correspond to  invariant  deformations. 
The twisted sectors  is based on the notion of Roan's pairs. The cardinality of the union of such sectors, we call {\em Roan's Hodge numbers}.
We obtain an explicit algorithmic construction such Hodge numbers.

Theorem \ref{main}  states that, for all orbifols of Calabi-Yau threefolds Fermat type,  Roan's Hodge numbers correctly count the stringy Euler numbers due to Vafa formula.}

Second,
  for Calabi-Yau threefolds Fermat type, we apply Roan's Hodge numbers  to get a relation between the Borcea-Voisin construction \cite{Borcea, Voisin}  and Berglund-H\"ubsch-Krawits mirror duality.
 {
 
Recall, that Borcea and Voisin \cite{Borcea, Voisin}  consider a threefold being the product of the elliptic curve $E$, $x_1^4+x_2^4=x_0^2$ with involution 
$j:E\to E$ in the weighted projective space $P(1,1,2)$, 
and a K3 surface $S$, $x_0^2=f(w, y,z)$ with involution $\iota$ acting by $-1$ on the lattice of second cohomology group $H^{2,0}(S)$.
They proved that after resolution of singularities of $E\times S/j\times \iota$, the threefold
$X=\widetilde{E\times S/j\times \iota}$ is a Calabi-Yau threefold.
Moreover, the Borcea-Voisin construction provides such Calabi-Yau threefold $X$ with a mirror $X^{\vee}$ of the same elliptic curve form but with $K3$ with swapped Nikulin's parameters.

We are interested in BV-construction  for $K3$ surface of Fermat type. There are ten such $K3$ surfaces. For each of such ten types K3 surfaces $x_0^2=f(w, y,z)$ of Fermat type, we consider the Fermat threefold  
\[
W=x_1^4+x_2^4+f(w, y,z)
\]
and  the orbifold ${W/G}$, where the group 
$G$ is isomorphic to $G_0\times Z_2$, where $G_0$ is the cyclic group of order $d$, where $d$ is the degree of the Fermat threefold, and $Z_2$ is the Abelian group of order 2.

Denote by $\tilde X=\widetilde{W/G}$ \, a  Calabi-Yau threefold  obtained by resolution of singularities ${W/G}$.

We prove in Theorem \ref{BV-BHK} existence of a group $G\cong G_0\times Z_2$ such that the Roan's Hodge numbers of  $\widetilde{W/G}$ coincide with Hodge numbers of $X=\widetilde{E\times S/j\times \iota}$. The proof is going case-by-case, since it might be several groups isomorphic 
 $G_0\times Z_2$ with different Hodge numbers and  Euler numbers.}

There exists a database \cite{KS} for reflexive polytopes.

We provide a database \cite{DB} for all  orbifolds of Calabi-Yau threefolds Fermat type indicating decomposition Hodge numbers into deformations and Roan' pairs.

For a tuple of positive integers $(a_1\le  a_2\le  a_3\le  a_4\le a_5)$ such that 
\[
\frac{1}{a_1}+\cdots  +\frac{1}{a_5}=1,
\]
the Fermat threefold 
\[
W_0=x_1^{a_1}+x_2^{a_2}+x_3^{a_3}+x_4^{a_4}+x_5^{a_5}, 
\]
of degree $d$, where $d=ta_5$, where $t$ is minimal such that $\frac{ta_5}{a_i}\in \mathbb Z$,  in the weighted projective space
\[
\mathbb P(k_1, k_2, k_3, k_4, k_5), \quad k_i=\frac{d}{a_i}, \quad i=1, \cdots, 5, 
\]
quotient by the  group $G_0$ a cyclic group of order $d$ is a quasi-smooth Calabi-Yau threefolds.

Such Calabi-Yau threefolds and their orbifolds are of our interest here.

Specifically, for $G^{\max} \subset SL_5$ a maximal group of diagonal symmetries of $W_0$ and an admissible  subgroup $G$,  $G_0\subset G\subset G^{\max}\subset SL_5$, we consider an orbifold $W_0/G $. Such an orbifold  is a singular variety in general, and we are interested in desingularization of such orbifold and an efficient algorithm to obtain its Hodge numbers $h^{1,1}$ and $h^{2,1}$, and moreover to get sets of lattice elements which corresponds to deformations of Complex structures and whose come from resolution singularities.

Krawitz \cite{Krawitz} invented orbifold Hodge numbers and formulated 
the Berglund-H\"ubsch mirror symmetry as a mirror pair construction with respect to such Hodge numbers.  Namely  the  {\em  Berglund-H\"usch-Krawitz duality} associates a Landau-Ginzburg orbifold $W/G$, where $W$ is an invertible nondegenerate quasihomogeneous polynomial potential and $G$ is an admissible group, with a dual Landau-Ginzburg orbifold $W^T /G^T$ .

For Calabi-Yau threefolds of Fermat type we provide a combinatorial view point on the mirror duality following the ideas of Roan \cite{Roan}.
Namely, let $\widetilde{W/G}$ be a smooth variety obtained by resolution singularities of ${W/G}$. 

Vafa \cite{Vafa} proposed a formula for the Eulerian characteristics of $\widetilde{W/G}$
\begin{equation}\label{vafa0}
   \chi(\widetilde{W/G})=\frac{1}{|G|}\sum_{g_1,g_2\in G}(\prod_{i\in [5], \mbox{ such that i-th coordinate }g_1 \mbox{ and } g_2=1}(1-a_i)).
\end{equation}

The mirror symmetry implies the following equality
\begin{equation}\label{mir0}
 \chi(\widetilde{W/G})=-\chi(\widetilde{W/G^T}).   
\end{equation}

From the Vafa formula (\ref{vafa0}) we can not directly see the equality (\ref{mir0}).

Roan (\cite{Roan}) showed that (\ref{mir0}) can be computed using deformations of the complex structure wrt $G$, deformations of the complex structure wrt $G^T$ and the Roan pairs, some pairs of orthogonal elements in $G^T$ and $G$ and some pairs of orthogonal elements of $G$ and $G^T$. Since for  $\chi(\widetilde{W/G^T})$, one has to switch all these data from $G$ and $G^T$,  we
obtain (\ref{mir0}) (see details in Section \ref{vafa-1}).

For $a_i\ge 3$, $i=1, \ldots, 5$,
Roan proved the coincidence between the Vafa formula and a combinatorial computation with Roan's pairs (see \cite{Roan} Theorem 1 ).

We checked the Roan formula for all $147$ Calabi-Yau Fermat and their orbifolds, including 49 cases Fermat and their orbifolds have been considered by 
Roan.  





{
We do computer verification that twice the difference of the Hodge numbers obtained by the Roan recipe to count twisted parts of Hodge numbers gives correct Euler characteristic computed by the Vafa formula.}

An advantage of this formula it that we get explicitly set of lattice vectors cardinality of which is counted by the Hodge numbers 
and these sets are of importance for applications for the  Heterotic string with Calabi-Yau compactification \cite{Belavin}.
Namely, we can  get Calabi-Yau multiplier of any total vertex operator of Heterotic string without checking the OPE axioms. 

For all orbifolds of quintic (up to symmetry isomorphisms), we give lists of the deformations and  
Roan's pairs for counting the Hodge numbers.

In Section 7 we consider the Borcea-Voisin construction (Voisin \cite{Voisin} and Borcea \cite{Borcea})  and relate it to orbifolds of corresponding Fermat threefolds and get another mirrors due to BHK. There are ten of fourteen $K3$ surfaces of Fermat type, which fit into the BV-construction.
For each of such cases, we find a corresponding Fermat orbifold, such that it desingularization has the same Hodge numbers. For cases of self mirrors in BV-construction, we found that BHK orbifolds are non self mirror. 
{
This reminds the case with elliptic curve. An elliptic curve $E$ is self dual by replacing $\tau$ to $-1/\tau$, but, from the orbifold view point, we have two different orbifolds.}

In Section 9, we define the Roan pairs for K3 surfaces Fermat type (there are 14 such K3 surfaces) and demonstrate in Theorem \ref{K3Roan}  decomposition of second mixed cohomology into untwisted and twisted parts  using deformations and Roan's pairs.
This allows us to confirm BHK mirror symmetry, since $h^{1,1}=20$ for all such orbifolds and the Hodge diamond is the same for all orbifolds.

In Conclusion we note on a possible relations to F-theory. 

The algorithm in \cite{AB} has been used for obtaining all orbifolds of Fermat threefolds. 

\subsection{Acknowledgments} The research of S. Aleshin was carried out within the state assignment of Ministry of Science and Higher Education of the Russian Federation for IITP RAS,  A.Belavin has been supported by the state assignment FFWR-2024-0012, the work of G. Koshevoy was carried out at the Center for Pure Mathematics at MIPT with financial support by RSF grant 26-11-00115.

\section{Calabi-Yau Fermat orbifolds }
\subsection{Fermat hypersurface, Fano and Calabi-Yau}
For a solution $(a_1\le  \cdots \le  a_m)\in {\mathbb N}^m$ of a classic problem in Egyptian Fractions
\begin{equation}\label{eqypt1}
    \frac{1}{a_1}+\cdots +\frac{1}{a_m}=1,
\end{equation}
we define a degree $d=ta_m$, where $t\in\mathbb N$ is  a minimal positive integer such that, for any $j=1,\cdots, m$,
there holds
\[
\frac{ta_m}{a_j}\in \mathbb N.
\]
Then the equation $W_0=0$, 
\begin{equation}
   W_0= \sum x_i^{a_i}
\end{equation}
defines the Fermat hypersurface, a hypersurface of degree $d$   in the weighted projective space 
$\mathbb P(k_1,\cdots, k_m)$, $k_1\ge \cdots \ge k_m=t$,
\[
k_i=\frac{ta_m}{a_j}.
\]

Because of (\ref{eqypt1}), we have 
\[
\sum k_i=ta_m=d.
\]
Hence the weighted projective space 
$\mathbb P(k_1,\cdots, k_m)$, $k_1\ge \cdots \ge k_m=t$ is a {\em Fano } toric variety 
because the fractions $\frac{d}{k_j}=a_j$, $j=1,\cdots, a_{m-1}$, $\frac{d}{k_m}=t$,
are positive integers.

Moreover, quotient of such a Fermat hypersurface by the symmetry $G_0$ group of $\mathbb P(k_1,\cdots, k_m)$, 
has a unique singularity at zero and is a Calabi-Yau variety. 
\subsection{Automorphisms}
We consider $G(W_0)$, the full group of automorphism of $W_0$, what is the product of cyclic groups
\[
G(W_0):=Z_{a_1}\times \cdots \times Z_{a_m},
\]
and the maximal group, a subgroups which preserves  the monomials $\prod_{i=1}^m x_i$, or $G^{\max}$ is a diagonal subgroup of $SL_m$ of symmetries of $W_0$, that is the intersection 
\[
G^{\max}=G(W_0)\cap SL_m.
\]
For a lattice vector $z=(z_1, \cdots, z_m)\in\mathbb Z^m$, we define 
\[
\mathbf{e}_{a_1, \ldots, a_m}(z):=(exp(\frac{2\pi\sqrt{-1}z_1}{a_1}), \cdots, exp(\frac{2\pi\sqrt{-1}z_m}{a_m})).
\]
When we fixed the exponents $a_1, \ldots , a_m$ of the Fermat type, we freely use $\mathbf e$. Therefore, we may regard the group $G^{\max}$ as a set of lattice points $(z_1, \ldots, z_m)$ of the following set
\begin{equation}\label{group1}
Z(a_1, \cdots, a_m):=\{z\in\mathbb Z^m\,|\, 0\le z_i\le a_i-1\mbox{ and }
\sum_i\frac{z_i}{a_i}\in \{0,1,\cdots, m-1\}\}.
\end{equation}

Namely, we have 
\[G^{\max}:=\{\mathbf e(z), z\in Z(a_1, \cdots, a_m)\}.\]

An element  $\mathbf{e}_{a_1, \ldots, a_m}(z)$ is {\em of level} $k$ if it holds true
$\sum_i\frac{z_i}{a_i}=k$.

The cyclic subgroup $G_0\subset G^{\max}$, $G_0=Z_d$, is determined by the generator
\[
g_0:=\mathbf{e}(1,1,1,1,1)=(exp(\frac{2\pi\sqrt{-1}}{a_1}), \cdots, exp(\frac{2\pi\sqrt{-1}}{a_m})).
\]
\subsection{Subgroups and mirror dual subgroups}

We are interested in subgroups $G_0\subset G\subset G^{\max}$ and their {\em mirror dual by Krawitz} 
$G_0\subset G^T\subset G^{\max}$. The latter means that, for any pair of elements $g=\mathbf e(u_1, \ldots, u_m)\in G$ and $g'=\mathbf e(v_1, \ldots, v_m)\in G^T$, there holds 
\begin{equation}\label{duality-1}
\sum_i \frac 1{a_i}u_iv_i
\in \mathbb Z,\mbox{ or equivalently  } \sum_i g_ia_ig_i' \in \mathbb Z.
\end{equation}
 

For example, $G_0$ and $G^{\max}$ form a mirror dual pair.

\subsection{Deformations}
The set of the exponents of monomials  of all deformations is 
the following subset of $Z(a_1, \cdots, a_m)$,  
\[
S_{a_1, \cdots, a_m}:=\{S\in\mathbb Z^m\,|\, 0\le S_i\le a_i-2\mbox{ and }
\sum_i\frac{S_i}{a_i}=1\}.
\]
Then the set  of monomials
\[
\{x^S, \,\, S\in S(a_1, \cdots, a_m)\}
\]
is the set of all {\em deformations of the complex structure on} $W_0$.

Each monomial of such a set is invariant under the action of $G_0$,
\[
g_0(x^S)=x^S \cdot exp(2\pi\sqrt{-1}(\sum_i \frac{S_i}{a_i}))=x^S,
\]
and, for a power of the generator $g_0$, we have 
\[
g_0^b(x^S)=x^S \cdot exp(2\pi\sqrt{-1}(b-\sum_{j: b/a_j\in \mathbb N}S_j))=x^S.
\]
In fact the degree of the exponent is  equal $b\sum_i \frac{S_i}{a_i}-\sum_{j: b/a_j\in \mathbb N}S_j$, and this number is integer since $b$, $sum_i \frac{S_i}{a_i}$ and $\sum_{j: b/a_j\in \mathbb N}S_j$ are integer.

\subsection{}
For a subgroup $G_0\subset G\subset G^{\max}$, we define the invariant  subset of monomials under the action $G$, 
$S^G\subset S_{a_1, \cdots, a_m}$. 

It is convenient to regard the image of the deformations in the mirror group 
\begin{equation}
    {\mathbf e}(S_{a_1, \cdots, a_m}):=\{{\mathbf e}(S_1, \cdots, S_m)\subset G^T\}.
\end{equation}
Then we have the following 
\begin{dfn}
The set \[S^G= {\mathbf e}^{-1}(G^T\cap {\mathbf e}(S_{a_1, \cdots, a_m}))\]
is the set of $G$-invariant deformations.
\end{dfn}
One can see that there holds  $S^G= {\mathbf e}^{-1}(G^T)\cap S_{a_1, \cdots, a_m}$. For example,  
for the group $G^{\max}$, we have 
$$
S^{G^{\max}}= \mathbf e^{-1}(G_0)\cap S_{a_1, \ldots, a_m}. 
$$

\begin{rem}
    One can regard the set of all deformations as a subset of lattice points of the simplex 
    \[
\hat B(k_1,\cdots, k_m):=\{z_i\ge 0, \sum_i {k_iz_i}=d\},
\]
since it is nothing but  the simplex
\[
\Delta(a_1,\cdots, a_m):=\{(z_1, \cdots , z_m), \, z_i\ge 0, \sum_i \frac{z_i}{a_i}=1\},
\]
  and the latter simplex is obtained by shifting on vector $(\frac{1}{a_1}, \cdots, \frac{1}{a_m})
$, the reflexive polytope corresponding $W_{a_1, \cdots , a_m}$ in the weighted projective space $\mathbb P(k_1, \cdots, k_m$,
\[
B(k_1,\cdots, k_m):=\{z_i\ge -1, \sum_{i=1}^m {k_iz_i}=0\}.
\]

Because of that, and having in mind the mapping $\mathbf e$, one can regard elements of level one $G$ and $G^T$ as some sets of lattice points of simplexes $B(k_1,\cdots, k_m)$ or $\hat B(k_1,\cdots, k_m)$.


\end{rem}

\section{Vafa formula for Euler numbers of Fermat orbifolds}\label{vafa-1}
\subsection{}
Consider Calabi-Yau Fermat hypersurface 
\[
W:=W_{a_1,\cdots ,a_m}=\sum x_i^{a_i}
\]
in the weighted projective space $P(k_1,\cdots, k_m)$.


For a group $G^{\max}\supset G\supset G_0$, we consider $G$ as a diagonal subgroup of $SL_m$.

For $\widetilde{ W/G}$, a manifold obtained by resolutions of singularities an orbifold $W/G$, 
the Vafa formula \cite{Vafa} reads as follows

\begin{equation}\label{vafa2}
   \chi(\widetilde {W/G})=\frac{1}{|G|}\sum_{g_1,g_2\in G}(\prod_{i\in [m], \mbox{ such that i-th coordinate }g_1 \mbox{ and } g_2=1}(1-a_i)).
\end{equation}

We set the product over the empty set to be equal $1$.

Here we give simplest examples with an elliptic curve.

Consider $m=3$. In such a case, there are three solutions to (\ref{eqypt1}),

\subsection{}\label{E1}
For $a_1=a_2=a_3=3.$, 
we get the cubic curve
\[
W_0=x_1^3+x_2^3+x_3^3
\]
in $\mathbb P(1,1,1)$.

The group $G$ is the product of three cyclic groups $Z_3$ of order $3$,
$G^{\max}$ is product  of two $Z_3$ and has the following elements,
\[
g_1=\begin{pmatrix} 1&0&0\cr 0&1&0\cr 0&0&1
\end{pmatrix}, 
g_2=\begin{pmatrix} \omega &0&0\cr 0&\omega &0\cr 0&0&\omega
\end{pmatrix}, 
g_3=\begin{pmatrix} \omega^2 &0&0\cr 0&\omega^2 &0\cr 0&0&\omega^2
\end{pmatrix}, 
g_4=\begin{pmatrix} 1 &0&0\cr 0&\omega &0\cr 0&0&\omega^2
\end{pmatrix}, 
g_5=\begin{pmatrix} 1 &0&0\cr 0&\omega^2 &0\cr 0&0&\omega
\end{pmatrix}, 
\]
\[
g_6=\begin{pmatrix} \omega &0&0\cr 0&1 &0\cr 0&0&\omega^2
\end{pmatrix}, 
g_7=\begin{pmatrix} \omega^2 &0&0\cr 0&1 &0\cr 0&0&\omega
\end{pmatrix}, 
g_8=\begin{pmatrix} \omega &0&0\cr 0&\omega^2 &0\cr 0&0&1
\end{pmatrix}, 
g_9=\begin{pmatrix} \omega^2 &0&0\cr 0&\omega &0\cr 0&0& 1,
\end{pmatrix}.
\]
$\omega$ is a root of unity of order $3$.
$G_0$ is $Z_3$ and has elements
\[
g_1=\begin{pmatrix} 1&0&0\cr 0&1&0\cr 0&0&1
\end{pmatrix}, 
g_2=\begin{pmatrix} \omega &0&0\cr 0&\omega &0\cr 0&0&\omega
\end{pmatrix}, 
g_3=\begin{pmatrix} \omega^2 &0&0\cr 0&\omega^2 &0\cr 0&0&\omega^2
\end{pmatrix}.
\]
Let us check that the Vafa formula (\ref{vafa2}) gives zero for $W_0/G^{\max}$ and $W/G_0$ and they form the mirror pair.
Here are summands of (\ref{vafa2}) for each pair $(g_i,g_j)$,
\[
\begin{pmatrix}
    \phantom{g} & g_1& g_2& g_3& g_4& g_5&   g_6& g_7& g_8& g_9\cr  
    g_1&  -8& 1& 1& -2& -2& -2& -2& -2& -2\cr
     g_2&  1& 1& 1& 1& 1& 1& 1& 1& 1\cr
      g_3&  1& 1& 1& 1& 1& 1& 1& 1& 1\cr
      g_4&  -2& 1& 1& -2& -2& 1& 1& 1& 1\cr
      g_5&  -2& 1& 1& -2& -2& 1& 1& 1& 1\cr
      g_6&  -2& 1& 1& 1& 1& -2& -2& 1& 1\cr
      g_7&  -2& 1& 1& 1& 1& -2& -2& 1& 1\cr
      g_8&  -2& 1& 1& 1& 1& 1& 1& -2& -2\cr
      g_9&  -2& 1& 1& 1& 1& 1& 1& -2& -2\cr
\end{pmatrix}
\]
The sum of entries is equal $0$.

This is the unique mirror pair for the cubic since  $G^{\max}/G_0=Z_3$, and the later group has only two subgroups, $e$ and $Z_3$.
\subsection{}\label{E2}
The second Fermat  curve  is
\[
x_1^2+x_2^4+x_3^4\mbox{ in } \mathbb P(2,1,1).
\]
Let us check by the Vafa formula, that the mirror pair in such a case also with Euler characteristics $0$.

The maximal group has the following elements, we write only diagonals
\[
h_0=(0,0,0),\,h_1=(1/2,1/4,1/4),\,h_2=(0,1/2,1/2),\,h_3=(1/2,3/4,3/4),\, 
\]
\[h_4=(1/2,1/2,0),\,h_5=(1/2,0, 1/2),\, h_6=(0,1/4,3/4),\, h_7=(0,3/4,1/4).
\]
In such notations, the Vafa formula has to be read with zeros instead of units.
For $G_0$, we get $\chi(W_0/G_0)=0$, since the summands are 
\[
\begin{pmatrix}
    \phantom{g} & h_0& h_1& h_2& h_3& \cr  
    h_0&  -9& 1& -1& 1\cr
     h_1&  1& 1& 1& 1&\cr
      h_2&  -1& 1& -1& 1\cr
      h_3&  1& 1& 1& 1& \cr
      \end{pmatrix}
\]
and, for  $G^{\max}$, we also get $\chi(W_0/G^{\max})=0$,
\[
\begin{pmatrix}
    \phantom{g} & h_0& h_1& h_2& h_3& h_4& h_5& h_6 &h_7\cr  
    h_0&  -9& 1& -1& 1&-3 & -3& -1&- 1\cr
     h_1&  1& 1& 1& 1&\ 1& 1& 1& 1&\cr
      h_2&  -1& 1& -1& 1 &1& 1& -1& -1&\cr
      h_3&  1& 1& 1& 1& 1& 1& 1& 1&\cr
      h_4&  -3& 1& 1& 1& -3& 1& 1& 1&\cr
      h_5&  -3& 1& 1& 1& 1& -3& 1& 1&\cr
      h_6&  -1& 1& -1& 1& 1& 1& -1& -1&\cr
      h_7&  -1& 1& -1& 1& 1& 1& -1& -1&\cr
      
      \end{pmatrix}
\]

\subsection{}\label{E3}
The last case of the Calabi-Yau  Fermat curve is
\[
x_1^2+x_2^3+x_3^6 \mbox{ in }\mathbb P(3,2,1).
\]
It is also elliptic curve and has the Euler number $0$.
For $G_0$, we have the following elements, we again write only diagonals
\[
h_0=(0,0,0),\,h_1=(\frac{1}{2},\frac{1}{3},\frac{1}{6}),\,h_2=(0,\frac{2}{3},\frac{1}{3}),\,h_3=(\frac{1}{2},0,\frac{1}{2}),\, h_4=(0,\frac{1}{3},\frac{2}{3}),\,h_5=(\frac{1}{2},\frac{2}{3}, \frac{5}{6}),
\]
The entries in the Vafa formula are
\[
\begin{pmatrix}
    \phantom{g} & h_0& h_1& h_2& h_3& h_4& h_5\cr  
    h_0&  -10& 1& -1& -2&-1 & 1\cr
     h_1&  1& 1& 1& 1&\ 1& 1\cr
      h_2&  -1& 1& -1& 1 &-1& 1\cr
      h_3&  -2& 1& 1& -2& 1& 1\cr
      h_4&  -1& 1& -1& 1& -1& 1\cr
      h_5&  1& 1& 1& 1& 1& 1\cr
      \end{pmatrix}
      \]
This matrix has the elements sum equal zero.

\section{ Calabi-Yau Fermat threefolds}

Consider a Calabi-Yau Fermat threefold 
\[
W_{a_1,a_2,a_3,a_4,a_5}=\sum x_i^{a_i}
\]
in the weighted projective space $P(k_1,\cdots, k_5)$.

There are $147$ in total solutions to 
\[
\sum_{i=1}^5\frac{1}{a_i}=1, \qquad a_1\le a_2\le a_3\le a_4\le a_5.
\]

The degree $d$ of $W_{a_1,\cdots, a_5}$ satisfies 
\[d=\sum_ik_i=ta_5,\]
where $t\in \mathbb N$ is minimal such that $\frac{ta_5}{a_i}\in \mathbb N$, $i=1, \ldots, 4$.
Obvioulsy, there holds 
\[
d=k_ia_i,\quad \forall\, i=1,\cdots, 5.
\]
The  group $G_0$ is a cyclic group $Z_d$ with a generator $g_0$ being  matrix with the diagonal 
\[
g_0:=\mathbf e(1,1,1,1,1)=(exp(\frac{2\pi\sqrt{-1}}{a_1}, \cdots, exp(\frac{2\pi\sqrt{-1}}{a_5})).
\]

Then the Vafa formula for the Euler number for $W_{a_1,\cdots,a_5}$
\begin{equation}\label{vafa3-1}
   \chi(W/G_0)=\frac{1}{|G_0|}\sum_{0\le p,q\le d-1}(\prod_{i\in [5], \mbox{ such that i-th coordinate }g^p \mbox{ and } g^q=1}(1-a_i)).
\end{equation}


For an orbifold $\widetilde {W/G}$, obtained by resolution of singularities $W/G$, we have 

\begin{equation}\label{vafa3-2}
   \chi(\widetilde{W_{a_1,\ldots, a_5}/G})=\frac{1}{|G|}\sum_{g_1,g_2\in G}(\prod_{i\in [5], \mbox{ such that i-th coordinate }g_1 \mbox{ and } g_2=1}(1-a_i)).
\end{equation}

For the pair of mirror dual groups by Krawitz, the mirror symmetry holds in the form
\[
 \chi(\widetilde{ W_{a_1,\ldots, a_5}/G})=- \chi(\widetilde{ W_{a_1,\ldots, a_5}/G^T}).
\]
\section{Roan's pairs}
\subsection{Roan's pairs and main theorem}
The cardinality of the set of deformations $S^G$ equals $h^{2,1}( ( W_{a_1,\ldots, a_5}/G))$, and the cardinality of the set of deformations $S^{G^T}$ equals $h^{2,1}( W_{a_1,\ldots, a_5}/G^T))$.

Since $ W_{a_1,\ldots, a_5}/G$ is a singular orbifold, the difference 
\[h^{2,1}( W/G^T)- h^{2,1}( W/G)\] might be not equal to $\frac 12\chi_{Vafa}( W/G)$. That is the case, for almost all orbifolds of Calabi-Yau Fermat threefolds.  

Resolution of singularities brings additional terms to $h^{2,1}( W/G^T))- h^{2,1}( W/G))$.   

We call {\em Roan's pairs} the additional terms to the twisted sector of Hodge numbers. Roan's pairs corresponding to  the twisted sector.

To define $(G^T, G)-$ Roan's pairs,  
let us consider $G^T_{2Z}\subset G^T$  the subset of level one elements of $G^T$ which have two zeros among its coordinates.
And  $G_{3Z}\subset G$ is the subset of level one elements of  $G$  with three zeros among coordinates. 

\begin{dfn}
Then a pair  $(\mathbf z', \mathbf z)\in G^T_{2Z}\times G_{3Z}$ is a $(G^T, G)$- {\em Roan's pair} if 

\[
\sum  z'_i\,a_i\, z_i=0. 
\]

A pair   $(\mathbf z, \mathbf z')\in G_{2Z}\times G^T_{3Z}$ is a $(G, G^T)-$ {\em Roan's pair} if 
\[
\sum z_i\,a_i \,z'_i=0.
\]

     For any Fermat CY threefold and a group $G_0\subset G\subset G^{\max}$, we set 
\begin{equation}\label{R1}
h^{2,1}(\widetilde{W/G})=h^{2,1}( W/G) +|\{ (G^T, G) \mbox{ Roan's pairs}\}|=h^{1,1}(\widetilde { W/G^T});
\end{equation}

\begin{equation}\label{R2}
h^{2,1}(\widetilde{W/G^T})=h^{2,1}( W/G^T) +|\{ (G, G^T) \mbox{ Roan's pairs}\}|=h^{1,1}(\widetilde { W/G});
\end{equation}
\end{dfn}

Due to the above definition, we have (\ref{R1}) and (\ref{R2}) swap under  swap $G$ and $G^T$ as it has to be for the mirror symmetry. Moreover, the Euler characteristic with respect to such Hodge numbers coincides with Euler characteristic due to the Vafa formula.

\begin{thm}\label{main}
For Roan's Hodge numbers  (\ref{R1})-(\ref{R2}), there holds
\[
\frac 12 \chi(\widetilde{W/G})= 
h^{2,1}(\widetilde{ W/G^T}))- h^{2,1}( \widetilde{W/G})),
\]
where $ \chi(\widetilde{W/G})$ is the Euler characteristic computed by the Vafa formula (\ref{vafa3-1}).
\end{thm}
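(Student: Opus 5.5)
The plan is to expand the Vafa sum (\ref{vafa3-2}) by \emph{fixed-coordinate type} and match each piece with one of the four terms $|S^G|$, $|S^{G^T}|$, the $(G^T,G)$ Roan's pairs and the $(G,G^T)$ Roan's pairs. For a subset $I\subset[5]$, let $G_I$ be the subgroup of $G$ whose elements have trivial coordinates exactly on $I$ (and possibly more). Since $G\subset SL_5$ acts with $W$ quasi-smooth, only $|I|\ge 2$ gives a nonempty fixed locus. Regrouping $\sum_{g_1,g_2}$ by the common fixed set $I$ and applying M\"obius inversion on the lattice of subsets gives the standard form
\[
\chi(\widetilde{W/G})=\sum_{[g]}\ \chi\bigl(\mathrm{Fix}(g)/G\bigr),
\]
summed over $g\in G$ (abelian, so conjugacy classes are single elements). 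The term $g=1$ is the untwisted contribution $\chi(W/G)$. The terms with $|\mathrm{Fix}\,g|=3$ come from curves of $A$-type singularities. The terms with $|\mathrm{Fix}\,g|=2$ come from isolated points, where $\mathrm{Fix}(g)$ is $\{x_i^{a_i}+x_j^{a_j}=0\}$, a finite set.

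First I would treat the untwisted part. Here $\chi(W/G)$ is computed via the Milnor-ring description of $H^3$: the $G$-invariant monomials in the Jacobian ring of degree $d$ are exactly $S^G$, giving $h^{2,1}(W/G)=|S^G|$. The $h^{1,1}$ of $W/G$ comes from the hyperplane class, together with invariant classes of the ambient quotient. Using the characterisation $S^G=\mathbf e^{-1}(G^T)\cap S_{a_1,\dots,a_5}$, i.e.\ level-one elements of $G^T$ with no coordinate equal to $a_i-1$, I would rewrite $2|S^{G^T}|-2|S^G|$ as a signed count of level-one elements of $G$ and $G^T$ without zero coordinates. The point is that the $g_1=1$ row of the Vafa sum, when Fourier-expanded via the character sum $\frac1{|G|}\sum_{g}\prod(\cdot)$ together with the duality (\ref{duality-1}), becomes a sum over $G^T$. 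This is the same mechanism that makes the elliptic-curve examples above cancel.

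Next I would treat the twisted sectors. For $g$ with exactly two trivial coordinates $\{i,j\}$ and level one, the fixed points $x_i^{a_i}=-x_j^{a_j}$ give $\gcd$-many points. Modulo $G$, their number is counted by characters of the stabiliser, and each such orbit contributes a $(1,1)$ class, i.e.\ a level-one element with two zeros. The orbits surviving the quotient are exactly those orthogonal, in the pairing $\sum z_ia_iz'_i$, to some three-zero element of the dual group. This is the Roan pair condition, since orthogonality is the invariance of the corresponding twisted class under the other group. Similarly, $g$ with three trivial coordinates fixes the curve $\{x_i^{a_i}+x_j^{a_j}+x_k^{a_k}=0\}$, whose quotient has genus computable from the $m=3$ data above. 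Its $h^{0,1}$ part produces $(2,1)$ twisted classes counted by $(G^T,G)$ pairs, with the roles of the zero patterns reversed. I would then verify
\[
\tfrac12\bigl(\chi_{\mathrm{tw}}\bigr)=|\{(G,G^T)\text{ pairs}\}|-|\{(G^T,G)\text{ pairs}\}|
\]
by a Fourier/character argument on each coordinate face $I$, reducing to the lower-dimensional Fermat curve and point cases.

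The main obstacle is the exact identification of each twisted-sector Euler contribution with a Roan-pair count: level-one versus level-two ages, elements with zeros on the same face but different stabilisers, and the small exponents $a_i=2$, where Roan's original hypothesis $a_i\ge3$ fails and extra coincidences occur. I would first try to prove the face-by-face identity uniformly by character orthogonality. If the $a_i=2$ faces resist a uniform argument, I would fall back on what is feasible: since there are only $147$ exponent tuples and finitely many admissible $G$, the identity can be checked exhaustively by computer, enumerating $G^T$ through (\ref{duality-1}) and comparing with (\ref{vafa3-2}).
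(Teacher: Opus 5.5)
Your fallback, enumerating all $147$ exponent vectors and all admissible $G$ and comparing with (\ref{vafa3-2}), is exactly the paper's proof: a computer check recorded in the authors' database. The conceptual route you put first, however, has a genuine gap, because the split you plan does not match Roan's bookkeeping.

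You want the untwisted term $\chi(W/G)$ to equal $2|S^{G^T}|-2|S^G|$, and the fixed-locus terms to equal $2\bigl(|\{(G,G^T)\text{ pairs}\}|-|\{(G^T,G)\text{ pairs}\}|\bigr)$. Both fail for the mirror quintic, $G=G^{\max}$, $G^T=G_0$. There are no Roan pairs and $|S^{G^T}|-|S^G|=101-1$. Yet the quotient space $W/(G^{\max}/G_0)$ has Euler number $0$: the images of the $50$ points with three vanishing coordinates give $+10$, the ten punctured curves $\{x_i=x_j=0\}$ give $-10$, and the free locus gives $0$. So all of $\chi=200$ comes from twisted sectors. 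The $100$ exceptional divisors are all counted inside $|S^{G_0}|$. Hence $S^{G^T}$ is not an untwisted quantity, and no face-by-face matching of Roan pairs with fixed loci can close the argument. Moreover, passing from the sum over $G\times G$ with weights $1-a_i$ to $\sum_{[g]}\chi(\mathrm{Fix}(g)/G)$ over $G/G_0$ is a Landau--Ginzburg/Calabi--Yau comparison, not a M\"obius inversion.

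What does work is the character-sum idea you mention, applied to \emph{every} row of the Vafa sum rather than only to $g_1=1$, with no geometric regrouping at all. Let $I$ be the set of coordinates where $g_1$ is trivial. The row depends only on the image of $g_2$ in $\prod_{i\in I}\mathbb Z_{a_i}$. The function equal to $1-a_i$ at $0\in\mathbb Z_{a_i}$ and to $1$ elsewhere has Fourier transform $0$ at the trivial character and $-a_i$ at all others. The annihilator of the image of $G$ is the set of elements of $G^T$ supported in $I$. Poisson summation therefore gives
\[
\frac{1}{|G|}\sum_{g_2\in G}\ \prod_{i\in I,\ g_{2,i}=1}(1-a_i)=(-1)^{|I|}\,\#\{\mathbf z'\in G^T:\ z'_i\neq 0\Leftrightarrow i\in I\}.
\]
Now sort the rows by $|I|$.

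\begin{itemize}
\item $|I|=1$ or $4$: since $g_0\in G\cap G^T$, no element of $G$ or $G^T$ is supported on one coordinate, so these rows contribute nothing.
\item $|I|=5$: the zero-free elements of $G^T$ are $g_0^{\pm1}$ and those of levels $2$ and $3$. These correspond to $S^G$ via $\mathbf z'\mapsto\mathbf z'-(1,\dots,1)$ and $\mathbf z'\mapsto-\mathbf z'$, so this row gives $-(2+2|S^G|)$.
\item $|I|=0$: symmetrically, these rows give $+(2+2|S^{G^T}|)$. These are exactly the rows your quasi-smoothness remark discards.
\item $|I|=2$ or $3$: inversion swaps levels $1$ and $2$ on three-element supports, and the condition $\sum z_ia_iz'_i=0$ is exactly complementarity of supports. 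So these rows give $+2|\{(G,G^T)\text{ pairs}\}|$ and $-2|\{(G^T,G)\text{ pairs}\}|$ respectively.
\end{itemize}

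Summing yields the theorem for every exponent vector and every $G$, with no $a_i\ge3$ hypothesis and no computer. This is strictly more than the paper's verification provides.
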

{\em Proof}. We verified this formula by numeric computations. We put details of computations in \cite{DB}.
\hfill $\Box$

\begin{rem}
    One of the main reason to get a proof by explicit computations deformations and Roan's pairs is to obtain a database for Hodge numbers for all orbifolds of CY-threefolds of Fermat type. 
\end{rem}
\begin{rem}
It is worth noting that these combinatorial structures admit a natural physical interpretation within the free-field construction of the heterotic string.  As shown in \cite{MBG}, for Fermat-type orbifolds the Hodge numbers can be extracted directly from the structure of massless vertex operators corresponding to the $\mathbf{27}$ and $\overline{\mathbf{27}}$ representations of $E_6$. In this framework, certain massless states involve pairs of lattice vectors which (up to the generator of $G_0$) coincide precisely with the $(G^T,G)$ and $(G,G^T)$ Roan pairs, thus providing a physical realization of Roan's combinatorial definition.
\end{rem}
\subsection{Roan's pairs and lattice vectors}
For a  $(G^T, G)$- Roan's pair 
\[
(\mathbf z', \mathbf z), \quad \mathbf z'\in G^T_{2Z},\, \mathbf z\in G_{3Z}, 
\]
let consider the following pair of  vectors 
\[
(\mathbf Z', \mathbf Z)=(\mathbf z'+(\frac{a_1-1}{a_1},\cdots, \frac{a_5-1}{a_5}) \,\mod \mathbb Z, \mathbf z+(\frac{a_1-1}{a_1},\cdots, \frac{a_5-1}{a_5}) \,\mod \mathbb Z).
\]
Then, we have the following

{\bf Lemma}. {\em For a  $(G^T, G)$- Roan's pair 
\[
(\mathbf z', \mathbf z), \quad \mathbf z'\in G^T_{2Z},\, \mathbf z\in G_{3Z}, 
\]
we have 
\[\sum Z'_i=2, \quad \sum Z_i=3.\]}
{\em Proof}. Is obvious. 
\hfill $\Box$





    
\section{Orbifolds of the quintic}
Consider the hypersurface of degree 5
\[
W_0=x_1^5+\cdots +x_5^5
\]
in $\mathbb P(1,\cdots,1)$. Following \cite{GP}, we consider four different orbifolds and their mirrors, all other are 
obtained by action of the permutations groups $S_5$ on $(x_1, \ldots, x_5)$. For each of such an orbifold we compute  deformations and Roan's pairs. 
\begin{enumerate}
    \item 
For $G_0=Z_5$, $h^{2,1}(W_0/G_0)=101$, $h^{2,1}(W_0/G^{\max})=1$, there are $101$ $G$-incariant deformations, one $G^T$-deformation and there are no Roan's pairs, and 
\[
\chi(W_0/Z_5)=-200.
\]

\item For $G_1=Z_5\times Z_5$ with generators
[(0, 1/5, 0, 0, 4/5), (1/5, 1/5, 1/5, 1/5, 1/5)]
and dual group $G^T_1=Z_5\times Z_5\times Z_5$ with generators 
[(0, 0, 1/5, 4/5, 0), (0, 1/5, 0, 3/5, 1/5), (1/5, 1/5, 1/5, 1/5, 1/5)],
we have 
 $25$ $G_1$-invariant deformations

(1, 1, 1, 1, 1) 
(0, 0, 2, 3, 0) (0, 0, 3, 2, 0) (0, 1, 0, 3, 1) (0, 1, 1, 2, 1) 
(0, 1, 2, 1, 1) (0, 1, 3, 0, 1) (0, 2, 0, 1, 2) (0, 2, 1, 0, 2) 
(1, 0, 1, 3, 0) (1, 0, 2, 2, 0) (1, 0, 3, 1, 0) (1, 1, 0, 2, 1) 
(1, 1, 2, 0, 1) (1, 2, 0, 0, 2) (2, 0, 0, 3, 0) (2, 0, 1, 2, 0) 
(2, 0, 2, 1, 0) (2, 0, 3, 0, 0) (2, 1, 0, 1, 1) (2, 1, 1, 0, 1) 
(3, 0, 0, 2, 0) (3, 0, 1, 1, 0) (3, 0, 2, 0, 0) (3, 1, 0, 0, 1) 

and five $G^T_1$-invariant deformations

(1, 1, 1, 1, 1) 
(0, 2, 0, 0, 3) (0, 3, 0, 0, 2) (1, 0, 1, 1, 2) (1, 2, 1, 1, 0);

There are no $(G_1,G^T_1)$-Roan's pairs and there are $24$ $(G^T_1,G_1)$-pairs

[(1/5, 0, 1/5, 3/5, 0), (0, 1/5, 0, 0, 4/5)] 
[(1/5, 0, 1/5, 3/5, 0), (0, 2/5, 0, 0, 3/5)] [(1/5, 0, 1/5, 3/5, 0), (0, 3/5, 0, 0, 2/5)] [(1/5, 0, 1/5, 3/5, 0), (0, 4/5, 0, 0, 1/5)] [(1/5, 0, 2/5, 2/5, 0), (0, 1/5, 0, 0, 4/5)] 
[(1/5, 0, 2/5, 2/5, 0), (0, 2/5, 0, 0, 3/5)] [(1/5, 0, 2/5, 2/5, 0), (0, 3/5, 0, 0, 2/5)] [(1/5, 0, 2/5, 2/5, 0), (0, 4/5, 0, 0, 1/5)] [(1/5, 0, 3/5, 1/5, 0), (0, 1/5, 0, 0, 4/5)] 
[(1/5, 0, 3/5, 1/5, 0), (0, 2/5, 0, 0, 3/5)] [(1/5, 0, 3/5, 1/5, 0), (0, 3/5, 0, 0, 2/5)] [(1/5, 0, 3/5, 1/5, 0), (0, 4/5, 0, 0, 1/5)] [(2/5, 0, 1/5, 2/5, 0), (0, 1/5, 0, 0, 4/5)] 
[(2/5, 0, 1/5, 2/5, 0), (0, 2/5, 0, 0, 3/5)] [(2/5, 0, 1/5, 2/5, 0), (0, 3/5, 0, 0, 2/5)] [(2/5, 0, 1/5, 2/5, 0), (0, 4/5, 0, 0, 1/5)] [(2/5, 0, 2/5, 1/5, 0), (0, 1/5, 0, 0, 4/5)] 
[(2/5, 0, 2/5, 1/5, 0), (0, 2/5, 0, 0, 3/5)] [(2/5, 0, 2/5, 1/5, 0), (0, 3/5, 0, 0, 2/5)] [(2/5, 0, 2/5, 1/5, 0), (0, 4/5, 0, 0, 1/5)] [(3/5, 0, 1/5, 1/5, 0), (0, 1/5, 0, 0, 4/5)] 
[(3/5, 0, 1/5, 1/5, 0), (0, 2/5, 0, 0, 3/5)] [(3/5, 0, 1/5, 1/5, 0), (0, 3/5, 0, 0, 2/5)] [(3/5, 0, 1/5, 1/5, 0), (0, 4/5, 0, 0, 1/5)].

The Euler characteristic computed by Theorem \ref{main} is $\chi (\widetilde{W_0/G_1})=-88$ and the Vafa formula gives the same value.

\item 
For $G_2=Z_5\times Z_5$ with generators [(0, 1/5, 4/5, 3/5, 2/5), (1/5, 1/5, 1/5, 1/5, 1/5)][(0, 1/5, 4/5, 3/5, 2/5), (1/5, 1/5, 1/5, 1/5, 1/5)],
the dual group has generators [(0, 0, 1/5, 3/5, 1/5), (0, 1/5, 0, 1/5, 3/5), (1/5, 1/5, 1/5, 1/5, 1/5)], and 
we have $21$ $G_2$-invariant deformations

(1, 1, 1, 1, 1) 
(0, 0, 1, 3, 1) (0, 0, 2, 1, 2) (0, 1, 0, 1, 3) (0, 1, 2, 2, 0) 
(0, 1, 3, 0, 1) (0, 2, 0, 2, 1) (0, 2, 1, 0, 2) (0, 3, 1, 1, 0) 
(1, 0, 0, 2, 2) (1, 0, 1, 0, 3) (1, 0, 3, 1, 0) (1, 1, 0, 3, 0) 
(1, 2, 2, 0, 0) (1, 3, 0, 0, 1) (2, 0, 1, 2, 0) (2, 0, 2, 0, 1) 
(2, 1, 0, 0, 2) (2, 2, 0, 1, 0) (3, 0, 0, 1, 1) (3, 1, 1, 0, 0);

one $G^T_2$-invariant deformation 
(1, 1, 1, 1, 1), and no Roan's pairs.

Theorem \ref{main} says that $\chi (\widetilde{W_0/G_2})=-40$ and the same value due to the Vafa formula.




\item 
For $G_3=Z_5\times Z_5\times Z_5$ with generators
 [(0, 0, 0, 1/5, 4/5), (0, 1/5, 4/5, 0, 0), (1/5, 1/5, 1/5, 1/5, 1/5)],
and $G^T_3$ with generators [(0, 1/5, 1/5, 4/5, 4/5), (1/5, 1/5, 1/5, 1/5, 1/5)],
we have 
$5$ $G_3$-invariant deformations
(1, 1, 1, 1, 1) 
(1, 0, 0, 2, 2) (1, 2, 2, 0, 0) (3, 0, 0, 1, 1) (3, 1, 1, 0, 0);
$17$ $G_3^T$-invariant deformations

(1, 1, 1, 1, 1) 
(0, 0, 0, 2, 3) (0, 0, 0, 3, 2) (0, 2, 3, 0, 0) (0, 3, 2, 0, 0) 
(1, 0, 2, 0, 2) (1, 0, 2, 1, 1) (1, 0, 2, 2, 0) (1, 1, 1, 0, 2) 
(1, 1, 1, 2, 0) (1, 2, 0, 0, 2) (1, 2, 0, 1, 1) (1, 2, 0, 2, 0) 
(3, 0, 1, 0, 1) (3, 0, 1, 1, 0) (3, 1, 0, 0, 1) (3, 1, 0, 1, 0) ;

no $(G,G^T)$-Roan's pairs and $16$ $(G^T,G)$-Roan's pairs

[(1/5, 0, 0, 2/5, 2/5), (0, 1/5, 4/5, 0, 0)] 
[(1/5, 0, 0, 2/5, 2/5), (0, 2/5, 3/5, 0, 0)] [(1/5, 0, 0, 2/5, 2/5), (0, 3/5, 2/5, 0, 0)] [(1/5, 0, 0, 2/5, 2/5), (0, 4/5, 1/5, 0, 0)] [(1/5, 2/5, 2/5, 0, 0), (0, 0, 0, 1/5, 4/5)] 
[(1/5, 2/5, 2/5, 0, 0), (0, 0, 0, 2/5, 3/5)] [(1/5, 2/5, 2/5, 0, 0), (0, 0, 0, 3/5, 2/5)] [(1/5, 2/5, 2/5, 0, 0), (0, 0, 0, 4/5, 1/5)] [(3/5, 0, 0, 1/5, 1/5), (0, 1/5, 4/5, 0, 0)] 
[(3/5, 0, 0, 1/5, 1/5), (0, 2/5, 3/5, 0, 0)] [(3/5, 0, 0, 1/5, 1/5), (0, 3/5, 2/5, 0, 0)] [(3/5, 0, 0, 1/5, 1/5), (0, 4/5, 1/5, 0, 0)] [(3/5, 1/5, 1/5, 0, 0), (0, 0, 0, 1/5, 4/5)] 
[(3/5, 1/5, 1/5, 0, 0), (0, 0, 0, 2/5, 3/5)] [(3/5, 1/5, 1/5, 0, 0), (0, 0, 0, 3/5, 2/5)] [(3/5, 1/5, 1/5, 0, 0), (0, 0, 0, 4/5, 1/5)];

$\chi(\widetilde{W_0/G_3)}=-8$.

\end{enumerate}

\begin{rem}
Interestingly, the full set of G-invariant deformations and Roan's pairs in this case corresponds to elements of the pairs of Batyrev mirror lattices $M$ and $N$, and is also the Borisov cohomology that forms the $N=2$-Virasoro vertex algebra (see, \cite{MBG}).

\end{rem}

\section{Borcea-Voisin mirror pairs from the product of K3 surfaces and elliptic curve and Fermat orbifolds}
\subsection{}
Borcea \cite{Borcea} and Voisin \cite{Voisin} independently constructed mirror families of Calabi-Yau threefolds from products of K3 surfaces and elliptic curves.

Let us recall their results.  Consider a Calabi-Yau K3 surface $S$ with an involution $i:S\to S$ and an elliptic curve $E$ with an involution $j:E\to E$.

Specifically, $i$ acts on $H^{2,0}(S)$ as a multiplication by $-1$. Let $C_1, \cdots, C_N$ be disjoin curves which 
constitute the set of fixed points of $i$. Let $g_1, \ldots, g_N$ be genera of these curves, $N'=\sum_{l=1}^N g_l$.

Four points $p_1, \cdots, p_4$ are fixed point of $j:E\to E$ with the factor $\mathbb P_1$.

These involutions define an involution $k: S\times E\to S\times E$ as the product $k(s, e)=(i(s), j(e))$. The set of fixed points of $k$ is $4N$ curves $(C_l, p_m)$, $l=1, \cdots , N$, $m=1, 2, 3, 4$. After making blow-ups along these curves, we get a smooth Calabi-Yau threefold $\widetilde{S\times E}$ with the exceptional divisor $D$. Obviously, we can lift the involution $k$ to an involution $\tilde k:\widetilde{S\times E}\to \widetilde{S\times E}$.
The following theorem is due to Voisin \cite{Voisin}:
\begin{thm}
The quotient $V=\widetilde{S\times E}/\tilde k$ is smooth and there holds
\[
h^{1,1}(V)=11+5N-N', \quad h^{2,1}(V)=11+5N'-N.
\]
\end{thm}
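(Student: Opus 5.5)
The statement to prove is Voisin's formula for the Hodge numbers of $V=\widetilde{S\times E}/\tilde k$. The plan is to compute the cohomology of $V$ as the $\tilde k$-invariant part of $H^*(S\times E)$, plus the contribution of the exceptional divisor. This is the standard recipe for a quotient by an involution after blowing up its fixed locus of codimension two.

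\textbf{Smoothness.} Along each fixed curve $C_l\times\{p_m\}$ the involution $k$ acts by $-1$ on the two normal directions: one normal to $C_l$ in $S$, one tangent to $E$ at $p_m$. After blowing up this curve, $\tilde k$ fixes the exceptional divisor $D_{l,m}\cong \mathbb P(N)\to C_l$ pointwise. An involution fixing a smooth divisor is a reflection, so the quotient is smooth there. The canonical class stays trivial because the involution $k$ preserves the holomorphic volume form: it acts by $-1$ on both $H^{2,0}(S)$ and $H^{1,0}(E)$.

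\textbf{Untwisted part.} Write $H^*(S)=H^+\oplus H^-$ for the $\pm1$ eigenspaces of $i^*$. Then:
\begin{itemize}
\item $H^{2,0}$ lies in $H^-$.
\item $H^{1,1}(S)$ splits as $r$ invariant and $20-r$ anti-invariant classes, where $r=\mathrm{rk}\,H^2(S)^+$.
\item $j^*$ acts by $-1$ on $H^1(E)$ and trivially on $H^0(E)$ and $H^2(E)$.
\end{itemize}
Künneth then gives the invariant classes: $h^{1,1}$ gets $r+1$ (from $H^{1,1}(S)^+\otimes H^0$ and $H^0\otimes H^{1,1}(E)$), and $h^{2,1}$ gets $(20-r)+1$ (from $H^{1,1}(S)^-\otimes H^{0,1}(E)$ and $H^{2,0}(S)\otimes H^{0,1}(E)$, up to conjugation).

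\textbf{Twisted part.} Each exceptional divisor $D_{l,m}$ is a $\mathbb P^1$-bundle over $C_l$.
\begin{itemize}
\item It adds one class to $h^{1,1}$, its divisor class, giving $4N$ in total.
\item It adds $g_l$ classes to $h^{2,1}$, coming from $H^{1,0}(C_l)$ pushed through the divisor (Gysin), giving $4N'$ in total.
\end{itemize}
Hence
\[
h^{1,1}(V)=r+1+4N,\qquad h^{2,1}(V)=21-r+4N'.
\]

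\textbf{Nikulin relations (main obstacle).} It remains to express $r$ in terms of $N$ and $N'$. Nikulin's theory of non-symplectic involutions gives $N=1+(r-a)/2$ and $N'=(22-r-a)/2$, where $a$ is the second invariant of the lattice. From these, $N-N'=r-10$, i.e. $r=10+N-N'$. Substituting:
\[
h^{1,1}(V)=11+5N-N',\qquad h^{2,1}(V)=11+5N'-N.
\]

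The relation $r=10+N-N'$ can also be checked without the full lattice invariants. Apply the holomorphic Lefschetz formula to $i$ and compare with the topological Lefschetz number $\chi(\mathrm{Fix})=\sum(2-2g_l)=2N-2N'$, which equals $2+(r-(22-r))=2r-20$. This reproduces $r=10+N-N'$.

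The Lefschetz computation is the delicate point. One must also take care with the degenerate cases: the fixed locus empty ($r=10$, $a=10$, even), or the fixed locus being two elliptic curves. In those cases the formula above still holds with the appropriate convention for $N$ and $N'$.
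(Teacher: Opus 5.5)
The paper gives no proof of this statement; it only quotes it from Voisin \cite{Voisin}. Your argument is correct and is essentially Voisin's own. The decomposition $H^*(V)\cong H^*(S\times E)^{k}\oplus\bigoplus_{l,m}H^{*-2}(C_l)$ gives $h^{1,1}(V)=r+1+4N$ and $h^{2,1}(V)=21-r+4N'$, and eliminating $r$ via $r=10+N-N'$ gives the claim. If you substitute the paper's $N=1+\frac{r-a}{2}$ and $N'=11-\frac{r+a}{2}$ into these intermediate formulas instead, you recover Borcea's formulas in Proposition \ref{Bor1}. So your intermediate step also links the two results the paper quotes.

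There are two small points to fix.

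First, the relation $r=10+N-N'$ comes from the \emph{topological} Lefschetz fixed point formula $\chi(\mathrm{Fix}(i))=\sum_k(-1)^k\mathrm{tr}(i^*|H^k(S))$. It does not come from the holomorphic Lefschetz formula, which tells you nothing here: $i^*=-1$ on $H^{0,2}(S)$, so the global side is $0$, and by adjunction each fixed curve contributes $\frac14(2-2g_l+C_l^2)=0$. Present the topological route as the main argument, not as a side check. It needs no special convention when $\mathrm{Fix}(i)=\emptyset$, whereas Nikulin's expression $N=1+\frac{r-a}{2}$ fails in the case $(r,a,\delta)=(10,10,0)$.

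Second, say explicitly why the exceptional classes survive in the quotient. Let $\sigma$ be the blow-down, $\epsilon$ the inclusion of an exceptional divisor, and $\pi$ its projection onto the centre. For a codimension-two centre the blow-up formula gives $H^n(\widetilde{S\times E})=\sigma^*H^n(S\times E)\oplus\bigoplus_{l,m}\epsilon_*\pi^*H^{n-2}(C_l\times\{p_m\})$. Since $\tilde k$ is the identity on each exceptional divisor, $\tilde k_*\epsilon_*=\epsilon_*$. Hence this whole summand lies in $H^*(\widetilde{S\times E})^{\tilde k}=H^*(V)$.
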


For such a pair $(S, E)$, C. Voisin and C. Borcea constructed another pair of elliptic curve and K3 surface for which $N$ and $N'$ swapped and consider such a pair  as a mirror pair.

By the classification theorem of Nikulin \cite{Nikulin}, the isomorphic class of a $K3$ surface with involution $\iota$ acting by $-1$ on $H^{2,0}(S)$
is determined by a triplet $(r,a,\delta)$. 
Borcea \cite{Borcea} related such a  triplet $(r,a,\delta)$ with Hodge numbers of Calabi-Yau threefold 
\[
\widetilde{S\times E}/k.
\]
Namely, the following proposition is due to Borcea \cite{Borcea}.

\begin{prop}\label{Bor1} For a K3 surface $S$ with involution $\iota$ and triplet $(r,a,\delta)$,  
   the Hodge  numbers of Calabi-Yau threefold $\widetilde{S\times E}/k
   $ are 
   \[
   h^{1,1}=5+3r-2a,
   \]
   \[
   h^{2,1}=65-3r-2a.
   \]
\end{prop}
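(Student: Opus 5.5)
The Borcea formula follows from Voisin's theorem once the fixed-curve data $(N,N')$ of $\iota$ are rewritten in terms of Nikulin's invariants $(r,a,\delta)$. The plan has two steps. First, express $N$ and $N'$ through $(r,a)$ using Nikulin's description of the fixed locus of a non-symplectic involution on a K3 surface. Second, substitute into Voisin's formulas $h^{1,1}(V)=11+5N-N'$ and $h^{2,1}(V)=11+5N'-N$.

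\emph{Step 1: the fixed locus.} Nikulin's theorem describes the fixed locus $S^\iota$ in terms of $(r,a,\delta)$. Here $r$ is the rank of the invariant lattice $H^2(S,\mathbb Z)^{\iota}$, and $2^a$ is the order of its discriminant group, which is $2$-elementary. Outside the exceptional cases $(10,10,0)$ (empty fixed locus) and $(10,8,0)$ (two elliptic curves), the fixed locus is a disjoint union of one curve $C_g$ of genus $g=\frac{22-r-a}{2}$ and $k=\frac{r-a}{2}$ smooth rational curves. Hence
\[
N=1+\frac{r-a}{2},\qquad N'=\frac{22-r-a}{2}.
\]
I would assume this statement as known from \cite{Nikulin}. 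I would also indicate how it arises, because this is the genuine content and the likely obstacle. The Lefschetz fixed point formula gives
\[
\chi(S^\iota)=2+\mathrm{tr}(\iota^*|H^2)=2+r-(22-r)=2r-20,
\]
which yields $\sum_l(2-2g_l)=2r-20$, that is, $2N-2N'=2r-20$. A second relation comes from the Smith exact sequence, or equivalently from the mod-$2$ cohomology of the fixed locus: $\dim H^*(S^\iota,\mathbb F_2)=24-2a$. Since each curve contributes $2+2g_l$, this gives $2N+2N'=24-2a$. Solving the two linear equations gives
\[
N=\frac{r-a+2}{2},\qquad N'=\frac{22-r-a}{2},
\]
which agrees with the formula above and with the exceptional case $(10,8,0)$. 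The case $(10,10,0)$, where $N=N'=0$, must be checked separately: the Smith count fails there because the fixed locus is empty.

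\emph{Step 2: substitution.}
\[
h^{1,1}=11+5\cdot\frac{r-a+2}{2}-\frac{22-r-a}{2}=11+\frac{6r-4a-12}{2}=5+3r-2a,
\]
\[
h^{2,1}=11+5\cdot\frac{22-r-a}{2}-\frac{r-a+2}{2}=11+\frac{108-6r-4a}{2}=65-3r-2a.
\]
These are exactly the formulas of Proposition \ref{Bor1}.

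\emph{The exceptional case $(10,10,0)$.} Here the fixed locus is empty, so Voisin's formulas with $N=N'=0$ give $h^{1,1}=h^{2,1}=11$. Borcea's formula gives $5+30-20=15$ and $65-30-20=15$, which do not match. So the proposition as stated excludes this Enriques-type case, or it must be treated separately; in that case $k$ acts freely and no blow-up is needed. I would flag this exclusion explicitly.

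The main obstacle is the mod-$2$ count in Step 1, which relates the total Betti number of the fixed locus to $a$. I would cite Nikulin for it rather than reprove it.
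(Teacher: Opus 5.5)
Your proof is correct and follows the same route the paper implicitly takes. The paper gives no separate argument: it attributes the proposition to Borcea and records $N=1+\frac{r-a}{2}$ and $N'=11-\frac{r+a}{2}$, which, substituted into Voisin's formulas, is exactly your Step 2. Your remark that $(r,a,\delta)=(10,10,0)$ must be excluded (empty fixed locus, so $h^{1,1}=h^{2,1}=11$ rather than $15$) is also correct, and it is an omission in the paper's statement.
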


For a triplet $(r,a,\delta)$, we have 
the number of components $N=1+\frac{r-a}{2}$ and the genera $N'=11-\frac{r+a}{2}$, and the Euler number $\chi=12r-120$.

The mirror to $E\times S$ with triple $(r,a,\delta)$ is defined by   K3 $S^{\vee}$ with triple $(20-r,a,\delta)$ due to Borcea \cite{Borcea} and Voisin \cite{Voisin}.

A goal of this section is, for Fermat threefolds related to the Borcea-Voisin construction, find a corresponding  orbifold. 
There are ten such Fermat threefolds with exponents and corresponding pairs $(r,a)$ listed below (see Tables 1-3 in \cite{GLY})
\begin{itemize}
    \item  For $K3$ $x_0^2=w^6+y^6+z^6$ with  $r=1$, $a=1$; we consider Fermat of type $(4,4,6,6,6)$ by a group isomorphic to $Z_{12}\times Z_2$, and 
    BV-mirror $z_1^4+z_2^4+z_3^5+z_3z_4^5 +z_4z_5^6$ which is not of Fermat type;
    \item For $x_0^2=w^4+y^8+z^8$  with $r=2$, $a=2$; we consider Fermat orbifold of type $(4,4,4,8,8)$ by a group isomorphic to $Z_8\times Z_2$,
    BV-mirror $z_1^4+z_2^4+z_3^4z_4+z_3z_5^6+z_3z_5^6+z_4^3z_5^4$;
    \item For $x_0^2=w^3+y^7+z^{42}$  with
     $(4,4,3,7,42)$, $r=10$, $a=0$, which is self dual due to BV-construction; we consider  Fermat orbifold of type $(3,4,4,7,42)$ by a group isomorphic to $Z_{42}\times Z_2$,
    \item  For $x_0^2=w^3+y^{10}+z^{15}$  with
     $(4,4,3,10,15)$, $r=10$, $a=4$, which is self dual due to BV-construction; we consider  Fermat orbifold of type $(3,4,4,10,20)$ by a group isomorphic to $Z_{60}\times Z_2$,
    \item  
      For $x_0^2=w^4+y^{5}+z^{20}$  with
      $r=6$, $a=4$, with mirror $z_1^4+z_2^4+z_3^4z_5+z_3z_4^4 +z_5^{13}$ due  to BV-construction; we consider  Fermat orbifold of type $(4,4,4,5,20)$ by a group isomorphic to $Z_{20}\times Z_2$,

\item For $x_0^2=w^4+y^{6}+z^{12}$  with
      $r=4$, $a=4$, 
      we consider  Fermat orbifold of type $(4,4,4,6,12)$ by a group isomorphic to $Z_{24}\times Z_2$,
    \item  For $x_0^2=w^3+y^{8}+z^{24}$  with
      $r=6$, $a=2$,  mirror $z_1^4+z_2^4+z_3^3+z_5z_4^9 +z_5^{16}$ due  to BV-construction; 
      we consider  Fermat orbifold of type $(3,4,4,8,24)$ by a group isomorphic to $Z_{24}\times Z_2$,
\item For $x_0^2=w^3+y^{12}+z^{12}$  with
      $r=2$, $a=0$,  mirror  $z_1^4+z_2^4+z_3^3z_4+z_5z_4^8 +z_5^{11}$
      due  to BV-construction; 
      we consider  Fermat orbifold of type $(3,4,4,12,12)$ by a group isomorphic to $Z_{12}\times Z_2$,

\item For $x_0^2=w^5+y^{5}+z^{10}$  with
      $r=6$, $a=4$,  mirror $z_1^4+z_2^4+z_3^5+z_5z_4^5 +z_5^8$ 
      due  to BV-construction; 
      we consider  Fermat orbifold of type $(4,4,5,5,10)$ by a group isomorphic to $Z_{20}\times Z_2$,

\item 
For $x_0^2=w^3+y^{9}+z^{18}$  with
      $r=6$, $a=2$,  mirror $z_1^4+z_2^4+z_3^3+z_5z_4^8+z_5^{21}$
      due  to BV-construction; 
      we consider  Fermat orbifold of type $(3,4,4,9,18)$ by a group isomorphic to $Z_{36}\times Z_2$.

\end{itemize}

 
\begin{thm}\label{BV-BHK}
   For any  of above ten types of Fermat  there exists a group $G$ isomorphic to  $G_0\times Z_2\subset G^{\max}$ such that the Roan's Hodge numbers coincide with the Hodge numbers 
   computed as in Proposition \ref{Bor1}.
\end{thm}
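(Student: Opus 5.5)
The proof will be a case-by-case verification over the ten Fermat types listed above, following the same pattern as the proof of Theorem \ref{main}. For each type $(a_1,\dots,a_5)$ I first compute the Borcea target values from Proposition \ref{Bor1}: from the Nikulin pair $(r,a)$ of the K3 surface, $h^{1,1}=5+3r-2a$ and $h^{2,1}=65-3r-2a$. For instance, $(r,a)=(1,1)$ gives $(6,60)$, and the self-dual case $(10,0)$ gives $(35,35)$.

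For each type I then produce an explicit $G$. The natural candidate is $G=\langle g_0, \sigma\rangle$. Here $\sigma=\mathbf e(z)$ is an order-two element of $G^{\max}$ that implements the product involution $j\times\iota$. On the elliptic factor $x_1^4+x_2^4$ I take $\sigma$ to act by $-1$ on one of $x_1,x_2$, i.e.\ $z_1=2$ modulo $4$. On the remaining coordinates I let $\sigma$ act by $-1$ on the coordinate playing the role of $x_0$, in the K3 description $x_0^2=f(w,y,z)$. I check that $\sigma\in SL_5$, i.e.\ $\sum z_i/a_i\in\mathbb Z$, and that $\sigma\notin G_0$, so that $G\cong G_0\times Z_2$. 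Since several order-two elements may be available and give different groups isomorphic to $G_0\times Z_2$, I enumerate all of them in $G^{\max}/G_0$ and keep the one that matches.

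For the chosen $G$ I compute the dual group $G^T$ from the pairing (\ref{duality-1}). Then I count:
\begin{itemize}
\item the invariant deformations $S^G={\mathbf e}^{-1}(G^T)\cap S_{a_1,\dots,a_5}$ and $S^{G^T}$;
\item the sets $G_{2Z},G_{3Z},G^T_{2Z},G^T_{3Z}$ of level-one elements with two or three zero coordinates;
\item the $(G^T,G)$ and $(G,G^T)$ Roan pairs.
\end{itemize}
This yields $h^{2,1}(\widetilde{W/G})$ via (\ref{R1}) and $h^{1,1}(\widetilde{W/G})$ via (\ref{R2}), which I compare with the Borcea values. As a consistency check, I verify that $2(h^{1,1}-h^{2,1})=12r-120$ agrees with the Vafa formula (\ref{vafa3-2}), which Theorem \ref{main} guarantees once the Roan numbers are computed. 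So the Euler number, i.e.\ $h^{1,1}-h^{2,1}$, is automatic, and only one of the two Hodge numbers needs to be matched independently; the count of invariant deformations $|S^G|$ is the most direct one.

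The main obstacle is the choice of $\sigma$. The Euler number is fixed by the Vafa formula, but different $Z_2$ extensions can give different Vafa values, as the paper itself notes. I will therefore first filter candidate $\sigma$ by the Vafa formula, requiring $\chi=12r-120$, and then check the finer Hodge split by counting deformations and Roan pairs. These computations are finite but sizable for large degrees such as $(3,4,4,7,42)$, so I would carry them out by computer with the algorithm of \cite{AB} and record the explicit groups and pair lists in \cite{DB}. In the two self-dual cases ($r=10$) I additionally expect $G$ not to coincide with $G^T$ up to permutation, and I would record this as well.
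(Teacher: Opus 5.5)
Your proposal follows the same route as the paper. The paper's proof is a case-by-case computer check: for each of the ten types it writes down an explicit $G=\langle g_0,\sigma\rangle\cong G_0\times Z_2$ (for instance $G=G^{\max}$ in the two $r=10$ cases), lists its invariant deformations and its $(G,G^T)$ and $(G^T,G)$ Roan pairs, and (\ref{R1})--(\ref{R2}) then give $(h^{1,1},h^{2,1})=(5+3r-2a,\,65-3r-2a)$.

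Two small cautions. First, your geometric recipe for $\sigma$ (``$-1$ on the coordinate playing the role of $x_0$'') has no literal meaning in the Fermat coordinates, where $x_0$ has been eliminated, so it is the exhaustive enumeration of order-two $\sigma\notin G_0$ that carries the argument. Second, matching $h^{2,1}$ always needs the $(G^T,G)$ Roan pairs on top of $|S^G|$, not the deformation count alone; for $(4,4,6,6,6)$, for example, $60=30+30$.
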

\begin{rem}
We consider BV-mirrors and corresponding BHK orbifold in another paper, since BV-mirrors are not of Fermat type.
\end{rem}


Another approach to relate Borcea-Voisin construction and Berglund-H\"ubsch-Chiodo-Ruan transposition rule is in \cite{ABS}. 
\section{Proof of Theorem \ref{BV-BHK}}
We provide the corresponding group $G$ case-by-case.
\subsection{}
Let us consider the Fermat threefold of type $(4,4,6,6,6)$ with $r=1$, $a=1$, $h^{1,1}=6$, $h^{2,1}=60$, $\chi=-108$,
\[
W_{4,4,6,6,6}=z_1^4+z_2^4+z_3^6+z_4^6+z_5^6.
\]
This threefold is of degree $12$ of the weighted projective space $\mathbb P(3,3,2,2,2)$. 

In order to relate this Fermat threefold with the quotient of the product of the elliptic curve $E=z_1^4+z_2^4+w^2$ in $\mathbb P(1,1,2)$ and the K3 surface $S=-w^2+z_3^6+z_4^6+z_5^6$ in $\mathbb P(3,1,1,1)$ by the involution, we consider an orbifold with the quotient group $Z_{12}\times Z_2$.

The involution $i$ has the only one fixed curve $0=z_3^6+z_4^6+z_5^6$ of genus $10$. Therefore $(1, 10)$ is the corresponding pair $(N,N')$.

By Theorem \ref{main}, for 
a group $G=Z_{12}\times Z_2$ with generators  
$$[(0, 1/2, 1/6, 1/6, 1/6), (1/4, 1/4, 1/6, 1/6, 1/6)],$$ the Hodge numbers the desingularization of $W_{4,4,6,6,6}/G$ 
are 
$h^{2,1}=60$ and $h^{1,1}=6$. 

Namely, for 
a group $G=Z_{12}\times Z_2$ with generators  
$$[(0, 1/2, 1/6, 1/6, 1/6), (1/4, 1/4, 1/6, 1/6, 1/6)],$$
the mirror group $G^T$ has generators 
$$[(0, 0, 0, 1/6, 5/6), (0, 0, 1/6, 0, 5/6), (1/4, 1/4, 1/6, 1/6, 1/6)].$$

There are $30$  $G$-invariant deformations 

(1, 1, 1, 1, 1) 
(0, 0, 0, 2, 4) (0, 0, 0, 3, 3) (0, 0, 0, 4, 2) (0, 0, 1, 1, 4) 
(0, 0, 1, 2, 3) (0, 0, 1, 3, 2) (0, 0, 1, 4, 1) (0, 0, 2, 0, 4) 
(0, 0, 2, 1, 3) (0, 0, 2, 2, 2) (0, 0, 2, 3, 1) (0, 0, 2, 4, 0) 
(0, 0, 3, 0, 3) (0, 0, 3, 1, 2) (0, 0, 3, 2, 1) (0, 0, 3, 3, 0) 
(0, 0, 4, 0, 2) (0, 0, 4, 1, 1) (0, 0, 4, 2, 0) (1, 1, 0, 0, 3) 
(1, 1, 0, 1, 2) (1, 1, 0, 2, 1) (1, 1, 0, 3, 0) (1, 1, 1, 0, 2) 
(1, 1, 1, 2, 0) (1, 1, 2, 0, 1) (1, 1, 2, 1, 0) (1, 1, 3, 0, 0) 
(2, 2, 0, 0, 0) .

There are $5$ $G^T$-invariant deformation
$$(1, 1, 1, 1, 1) 
(0, 0, 2, 2, 2) (0, 2, 1, 1, 1) (2, 0, 1, 1, 1) (2, 2, 0, 0, 0).$$

There is one $(G, G^T)$-Roan pair
$$
[(0, 0, 1/3, 1/3, 1/3), (1/2, 1/2, 0, 0, 0)], $$

There are $30$ $(G^T, G)$-Roan's pairs 

[(0, 0, 1/6, 1/6, 2/3), (1/4, 3/4, 0, 0, 0)] 
[(0, 0, 1/6, 1/6, 2/3), (1/2, 1/2, 0, 0, 0)] [(0, 0, 1/6, 1/6, 2/3), (3/4, 1/4, 0, 0, 0)] [(0, 0, 1/6, 1/3, 1/2), (1/4, 3/4, 0, 0, 0)] [(0, 0, 1/6, 1/3, 1/2), (1/2, 1/2, 0, 0, 0)] 
[(0, 0, 1/6, 1/3, 1/2), (3/4, 1/4, 0, 0, 0)] [(0, 0, 1/6, 1/2, 1/3), (1/4, 3/4, 0, 0, 0)] [(0, 0, 1/6, 1/2, 1/3), (1/2, 1/2, 0, 0, 0)] [(0, 0, 1/6, 1/2, 1/3), (3/4, 1/4, 0, 0, 0)] 
[(0, 0, 1/6, 2/3, 1/6), (1/4, 3/4, 0, 0, 0)] [(0, 0, 1/6, 2/3, 1/6), (1/2, 1/2, 0, 0, 0)] [(0, 0, 1/6, 2/3, 1/6), (3/4, 1/4, 0, 0, 0)] [(0, 0, 1/3, 1/6, 1/2), (1/4, 3/4, 0, 0, 0)] 
[(0, 0, 1/3, 1/6, 1/2), (1/2, 1/2, 0, 0, 0)] [(0, 0, 1/3, 1/6, 1/2), (3/4, 1/4, 0, 0, 0)] [(0, 0, 1/3, 1/3, 1/3), (1/4, 3/4, 0, 0, 0)] [(0, 0, 1/3, 1/3, 1/3), (1/2, 1/2, 0, 0, 0)] 
[(0, 0, 1/3, 1/3, 1/3), (3/4, 1/4, 0, 0, 0)] [(0, 0, 1/3, 1/2, 1/6), (1/4, 3/4, 0, 0, 0)] [(0, 0, 1/3, 1/2, 1/6), (1/2, 1/2, 0, 0, 0)] [(0, 0, 1/3, 1/2, 1/6), (3/4, 1/4, 0, 0, 0)] 
[(0, 0, 1/2, 1/6, 1/3), (1/4, 3/4, 0, 0, 0)] [(0, 0, 1/2, 1/6, 1/3), (1/2, 1/2, 0, 0, 0)] [(0, 0, 1/2, 1/6, 1/3), (3/4, 1/4, 0, 0, 0)] [(0, 0, 1/2, 1/3, 1/6), (1/4, 3/4, 0, 0, 0)] 
[(0, 0, 1/2, 1/3, 1/6), (1/2, 1/2, 0, 0, 0)] [(0, 0, 1/2, 1/3, 1/6), (3/4, 1/4, 0, 0, 0)] [(0, 0, 2/3, 1/6, 1/6), (1/4, 3/4, 0, 0, 0)] [(0, 0, 2/3, 1/6, 1/6), (1/2, 1/2, 0, 0, 0)] 
[(0, 0, 2/3, 1/6, 1/6), (3/4, 1/4, 0, 0, 0)] 


Note that   following Voisin and Borcea, we have $h^{1,1}=11+5-10=5+1$, and following orbifold construction there is 5 deformation wrt $G^T$ and one Roan pair.

Borcea \cite{Borcea} considered  the hypersurface
\[
W'=z_1^4+z_2^4+z_3^5+z_3z_4^5+z_4z_5^6
\]
in $\mathbb P(25, 25, 20, 16, 14)$ as a mirror to $\widetilde{W_{4,4,6,6,6}/G}$.

Thus,   for $\widetilde{W_{4,4,6,6,6}/G}$, we have two mirrors $\widetilde{W_{4,4,6,6,6}/G^T}$ and $W'$ or they are birational.

Note that there are 6 more subgroups in $G^{\max}$ isomorphic $Z_{12}\times Z_2$ with Euler numbers $\pm 24$ and $\pm 60$.

\subsection{}
The Fermat threefold of  the type $(4,4,4,8,8)$ in $\mathbb P(2,2,2,1,1)$, related to the Borcea-Voisin construction with $r=2$, $a=2$, $h^{1,1}=7$, $h^{2,1}=55$, $\chi=-96$,
\[
W_{4,4,4,8,8}=z_1^4+z_2^4+z_3^4+z_4^8+z_5^8.
\]
We relate the desingularization of an orbifold  $W_{4,4,4,8,8}/Z_8\times Z_2$ with the quotient of the product of the elliptic curve $E=z_1^4+z_2^4+w^2$ in $\mathbb P(1,1,2)$ and the K3 surface $S=-w^2+z_3^4+z_4^8+z_5^8$ in $\mathbb P(4,2,1,1)$ by involution $i$ as above. 
The involution $i$ has the only one fixed curve $0=z_3^4+z_4^8+z_5^8$ of genus $9$. Therefore here $(9, 1)$ is the corresponding pair $(N',N)$, Hodge numbers are
$h^{2,1}(\widetilde{E\times S/i})=55$ and $h^{1,1}(\widetilde{E\times S/i})=7$, and the Euler characteristic $\chi(W/i)=-96$.

For the group $G'=Z_8\times Z_2\subset G^{\max}$ with generators  [(0, 1/2, 1/2, 0, 0), (1/4, 1/4, 1/4, 1/8, 1/8)],  we have the Hodge numbers $h^{2,1}(\widetilde{W_{4,4,4,8,8}/G'})=55$ and $h^{1,1}(\widetilde{W_{4,4,4,8,8}/G'})=7$.

Namely, the mirror group $G'^T$ has generators  [(0, 0, 0, 1/8, 7/8), (0, 0, 1/2, 0, 1/2), (0, 1/4, 1/4, 0, 1/2), (1/4, 1/4, 1/4, 1/8, 1/8)],
there are $45$ $G'$-invariant deformations 

(1, 1, 1, 1, 1) 
(0, 0, 0, 2, 6) (0, 0, 0, 3, 5) (0, 0, 0, 4, 4) (0, 0, 0, 5, 3) 
(0, 0, 0, 6, 2) (0, 0, 2, 0, 4) (0, 0, 2, 1, 3) (0, 0, 2, 2, 2) 
(0, 0, 2, 3, 1) (0, 0, 2, 4, 0) (0, 1, 1, 0, 4) (0, 1, 1, 1, 3) 
(0, 1, 1, 2, 2) (0, 1, 1, 3, 1) (0, 1, 1, 4, 0) (0, 2, 0, 0, 4) 
(0, 2, 0, 1, 3) (0, 2, 0, 2, 2) (0, 2, 0, 3, 1) (0, 2, 0, 4, 0) 
(0, 2, 2, 0, 0) (1, 0, 0, 0, 6) (1, 0, 0, 1, 5) (1, 0, 0, 2, 4) 
(1, 0, 0, 3, 3) (1, 0, 0, 4, 2) (1, 0, 0, 5, 1) (1, 0, 0, 6, 0) 
(1, 0, 2, 0, 2) (1, 0, 2, 1, 1) (1, 0, 2, 2, 0) (1, 1, 1, 0, 2) 
(1, 1, 1, 2, 0) (1, 2, 0, 0, 2) (1, 2, 0, 1, 1) (1, 2, 0, 2, 0) 
(2, 0, 0, 0, 4) (2, 0, 0, 1, 3) (2, 0, 0, 2, 2) (2, 0, 0, 3, 1) 
(2, 0, 0, 4, 0) (2, 0, 2, 0, 0) (2, 1, 1, 0, 0) (2, 2, 0, 0, 0);

and four $G'^T$-invariant deformations 
(1, 1, 1, 1, 1) 
(0, 0, 0, 4, 4) (0, 2, 2, 0, 0) (2, 0, 0, 2, 2);

There are three $(G',G'^T)$-Roan's pairs [(1/2, 0, 0, 1/4, 1/4), (0, 1/4, 3/4, 0, 0)] 
[(1/2, 0, 0, 1/4, 1/4), (0, 1/2, 1/2, 0, 0)] [(1/2, 0, 0, 1/4, 1/4), (0, 3/4, 1/4, 0, 0)];
and and $10$ $(G'^T,G')$-Roan's pairs 

[(1/4, 0, 0, 1/8, 5/8), (0, 1/2, 1/2, 0, 0)] 
[(1/4, 0, 0, 1/4, 1/2), (0, 1/2, 1/2, 0, 0)] [(1/4, 0, 0, 3/8, 3/8), (0, 1/2, 1/2, 0, 0)] [(1/4, 0, 0, 1/2, 1/4), (0, 1/2, 1/2, 0, 0)] [(1/4, 0, 0, 5/8, 1/8), (0, 1/2, 1/2, 0, 0)] 
[(1/2, 0, 0, 1/8, 3/8), (0, 1/2, 1/2, 0, 0)] [(1/2, 0, 0, 1/4, 1/4), (0, 1/2, 1/2, 0, 0)] [(1/2, 0, 0, 3/8, 1/8), (0, 1/2, 1/2, 0, 0)] [(1/2, 1/4, 1/4, 0, 0), (0, 0, 0, 1/2, 1/2)] 
[(3/4, 0, 0, 1/8, 1/8), (0, 1/2, 1/2, 0, 0)]. 
[(0, 0, 1/2, 0, 1/2), (1/3, 1/4, 1/4, 1/7, 1/42)]

Thus, $h^{2,1}(\widetilde{W_{4,4,4,8,8}/G'})=55$ and $h^{1,1}(\widetilde{W_{4,4,4,8,8}/G'})=7$, and $\chi(\widetilde{W_{4,4,4,8,8}/G'})=-96$.

In this example, we have $h^{2,1}(\widetilde{W_{4,4,4,8,8}/G})=55$ is constituted of $45$ deformations and $10$ Roan's pairs.

\subsection{}
Here we consider the Fermat threefold of  the type $(3, 4, 4, 7, 42)$ in $\mathbb P(28, 21, 21, 12, 2)$, related to the B-V construction of the threefold of 
the product the elliptic curve $E$ $z_1^4+z_2^4=z_0^2$ and $K3$ surface $S$ $z_0^2=z_3^3+z_4^7+z_5^{42}$, with $r=10$, $a=0$, $h^{1,1}=35$, $h^{2,1}=35$,
$\chi=0$ and sef dual due to BV-construction,
\[
W_{3,4,4,7,42}=z_1^3+z_2^4+z_3^4+z_4^7+z_5^{42}.
\]
We consider the orbifold 
\[
\widetilde{W_{(3,4,4,7,42)}/Z_{84}\times Z_2},
\]
Note that $G^{\max}=G_0\times Z_2$, so the mirror orbifold is with the group $G_0$.

We get the generators $g_1=(0, 0, 1/2, 0, 1/2)$ and $g_0=(1/3, 1/4, 1/4, 1/7, 1/42)$ for $G_0\times Z_2=G^{\max}$
and the mirror $G_0$ has  the generator $g_0=(1/3, 1/4, 1/4, 1/7, 1/42)$.

There are seventeen  $G^{\max}$-invariant deformations

(1, 1, 1, 1, 1) 
(0, 0, 0, 1, 36) (0, 0, 0, 2, 30) (0, 0, 0, 3, 24) (0, 0, 0, 4, 18) 
(0, 0, 0, 5, 12) (0, 1, 1, 0, 21) (0, 1, 1, 1, 15) (0, 1, 1, 2, 9) 
(0, 1, 1, 3, 3) (0, 2, 2, 0, 0) (1, 0, 0, 0, 28) (1, 0, 0, 1, 22) 
(1, 0, 0, 2, 16) (1, 0, 0, 3, 10) (1, 0, 0, 4, 4) (1, 1, 1, 0, 7),

and twenty nine $G_0$-invariant deformations

(1, 1, 1, 1, 1) 
(0, 0, 0, 1, 36) (0, 0, 0, 2, 30) (0, 0, 0, 3, 24) (0, 0, 0, 4, 18) 
(0, 0, 0, 5, 12) (0, 0, 2, 0, 21) (0, 0, 2, 1, 15) (0, 0, 2, 2, 9) 
(0, 0, 2, 3, 3) (0, 1, 1, 0, 21) (0, 1, 1, 1, 15) (0, 1, 1, 2, 9) 
(0, 1, 1, 3, 3) (0, 2, 0, 0, 21) (0, 2, 0, 1, 15) (0, 2, 0, 2, 9) 
(0, 2, 0, 3, 3) (0, 2, 2, 0, 0) (1, 0, 0, 0, 28) (1, 0, 0, 1, 22) 
(1, 0, 0, 2, 16) (1, 0, 0, 3, 10) (1, 0, 0, 4, 4) (1, 0, 2, 0, 7) 
(1, 0, 2, 1, 1) (1, 1, 1, 0, 7) (1, 2, 0, 0, 7) (1, 2, 0, 1, 1).

There are $6$ $(G_0,G^{\max})$ Roan's pairs

[(1/3, 0, 0, 1/7, 11/21), (0, 1/2, 1/2, 0, 0)] 
[(1/3, 0, 0, 2/7, 8/21), (0, 1/2, 1/2, 0, 0)] [(1/3, 0, 0, 3/7, 5/21), (0, 1/2, 1/2, 0, 0)] [(1/3, 0, 0, 4/7, 2/21), (0, 1/2, 1/2, 0, 0)] [(2/3, 0, 0, 1/7, 4/21), (0, 1/2, 1/2, 0, 0)] 
[(2/3, 0, 0, 2/7, 1/21), (0, 1/2, 1/2, 0, 0)],

and there are are $18$ $(G^{\max},G_0)$ Roan's pairs

[(1/3, 0, 0, 1/7, 11/21), (0, 1/4, 3/4, 0, 0)] 
[(1/3, 0, 0, 1/7, 11/21), (0, 1/2, 1/2, 0, 0)] [(1/3, 0, 0, 1/7, 11/21), (0, 3/4, 1/4, 0, 0)] [(1/3, 0, 0, 2/7, 8/21), (0, 1/4, 3/4, 0, 0)] [(1/3, 0, 0, 2/7, 8/21), (0, 1/2, 1/2, 0, 0)] 
[(1/3, 0, 0, 2/7, 8/21), (0, 3/4, 1/4, 0, 0)] [(1/3, 0, 0, 3/7, 5/21), (0, 1/4, 3/4, 0, 0)] [(1/3, 0, 0, 3/7, 5/21), (0, 1/2, 1/2, 0, 0)] [(1/3, 0, 0, 3/7, 5/21), (0, 3/4, 1/4, 0, 0)] 
[(1/3, 0, 0, 4/7, 2/21), (0, 1/4, 3/4, 0, 0)] [(1/3, 0, 0, 4/7, 2/21), (0, 1/2, 1/2, 0, 0)] [(1/3, 0, 0, 4/7, 2/21), (0, 3/4, 1/4, 0, 0)] [(2/3, 0, 0, 1/7, 4/21), (0, 1/4, 3/4, 0, 0)] 
[(2/3, 0, 0, 1/7, 4/21), (0, 1/2, 1/2, 0, 0)] [(2/3, 0, 0, 1/7, 4/21), (0, 3/4, 1/4, 0, 0)] [(2/3, 0, 0, 2/7, 1/21), (0, 1/4, 3/4, 0, 0)] [(2/3, 0, 0, 2/7, 1/21), (0, 1/2, 1/2, 0, 0)] 
[(2/3, 0, 0, 2/7, 1/21), (0, 3/4, 1/4, 0, 0)].

Therefore, for the Euler characteristics we have 
\[
\chi(\widetilde{W_{(3,4,4,7,42)}/G^{\max}}=\chi(\widetilde{W_{(3,4,4,7,42)}/G_0}=0.
\]

However, we can see that the deformations and Roans pairs are different for $\widetilde{W_{(3,4,4,7,42)}/G^{\max}}$ and for the mirror $\widetilde{W_{(3,4,4,7,42)}/G_0}$
despite the Hodge numbers $h^{2,1}(\widetilde{W_{(3,4,4,7,42)}/G^{\max}})=35$ and $h^{2,1}(\widetilde{W_{(3,4,4,7,42)}/G_0})=35$ are the same.
Therefore we get a multiple mirror wrt self mirror due to the BV-construction.

\subsection{}
For  Fermat with exponents  $(4,4,3,10,15)$,  $r=10$, $a=4$; $h^{1,1}=27$ and $h^{2,1}=27$, $\chi=0$,
\[
W_{4,4,3,10,15}=z_1^3+z_2^4+z_3^4+z_4^{10}+z_5^{15}
\]
quotient by the group $G^{\max}:=G_0\times Z_2\cong Z_{60}\times Z_2$ and the mirror groups $G^T=G_0$, we get
thirteen  $G^{\max}$-invariant deformations,
(1, 1, 1, 1, 1) 
(0, 0, 0, 2, 12) (0, 0, 0, 4, 9) (0, 0, 0, 6, 6) (0, 0, 0, 8, 3) 
(0, 1, 1, 1, 6) (0, 1, 1, 3, 3) (0, 1, 1, 5, 0) (0, 2, 2, 0, 0) 
(1, 0, 0, 0, 10) (1, 0, 0, 2, 7) (1, 0, 0, 4, 4) (1, 0, 0, 6, 1);

twenty-one $G_0$-invariant deformations, 
(1, 1, 1, 1, 1) 
(0, 0, 0, 2, 12) (0, 0, 0, 4, 9) (0, 0, 0, 6, 6) (0, 0, 0, 8, 3) 
(0, 0, 2, 1, 6) (0, 0, 2, 3, 3) (0, 0, 2, 5, 0) (0, 1, 1, 1, 6) 
(0, 1, 1, 3, 3) (0, 1, 1, 5, 0) (0, 2, 0, 1, 6) (0, 2, 0, 3, 3) 
(0, 2, 0, 5, 0) (0, 2, 2, 0, 0) (1, 0, 0, 0, 10) (1, 0, 0, 2, 7) 
(1, 0, 0, 4, 4) (1, 0, 0, 6, 1) (1, 0, 2, 1, 1) (1, 2, 0, 1, 1);

 six $(G_0,G^{\max)}$- Roan's pairs, 
[(0, 1/4, 1/4, 1/2, 0), (1/3, 0, 0, 0, 2/3)] 
[(0, 1/4, 1/4, 1/2, 0), (2/3, 0, 0, 0, 1/3)] [(1/3, 0, 0, 1/5, 7/15), (0, 1/2, 1/2, 0, 0)] [(1/3, 0, 0, 2/5, 4/15), (0, 1/2, 1/2, 0, 0)] [(1/3, 0, 0, 3/5, 1/15), (0, 1/2, 1/2, 0, 0)] 
[(2/3, 0, 0, 1/5, 2/15), (0, 1/2, 1/2, 0, 0)] ;

and fourteen $(G^{\max},G_0)$- Roan's pairs,
[(0, 1/4, 1/4, 1/2, 0), (1/3, 0, 0, 0, 2/3)] 
[(0, 1/4, 1/4, 1/2, 0), (2/3, 0, 0, 0, 1/3)] [(1/3, 0, 0, 1/5, 7/15), (0, 1/4, 3/4, 0, 0)] [(1/3, 0, 0, 1/5, 7/15), (0, 1/2, 1/2, 0, 0)] [(1/3, 0, 0, 1/5, 7/15), (0, 3/4, 1/4, 0, 0)] 
[(1/3, 0, 0, 2/5, 4/15), (0, 1/4, 3/4, 0, 0)] [(1/3, 0, 0, 2/5, 4/15), (0, 1/2, 1/2, 0, 0)] [(1/3, 0, 0, 2/5, 4/15), (0, 3/4, 1/4, 0, 0)] [(1/3, 0, 0, 3/5, 1/15), (0, 1/4, 3/4, 0, 0)] 
[(1/3, 0, 0, 3/5, 1/15), (0, 1/2, 1/2, 0, 0)] [(1/3, 0, 0, 3/5, 1/15), (0, 3/4, 1/4, 0, 0)] [(2/3, 0, 0, 1/5, 2/15), (0, 1/4, 3/4, 0, 0)] [(2/3, 0, 0, 1/5, 2/15), (0, 1/2, 1/2, 0, 0)] 
[(2/3, 0, 0, 1/5, 2/15), (0, 3/4, 1/4, 0, 0)]

Due to the BV-construction $\widetilde{E\times S/j\times \iota}$ is self dual in this case.

\subsection{} For the Fermat exponents $(4,4,4,5,20)$,  with $r=6$, $a=4$, $h^{1,1}=15$, $h^{2,1}=39$, $\chi=-48$,
$K3$ is defined by $S=z_3^4z_5+z_3z_4^4 +z_5^{13}$, and 
the orbifold of
\[
W_{4,4,4,5,20}=z_1^4+z_2^4+z_3^4+z_4^5+z_5^{20},
\]
by the group $G=G_0\times Z_2$ with generators 
 [(0, 0, 1/2, 0, 1/2), (1/4, 1/4, 1/4, 1/5, 1/20)], we have generators of the mirror group
 [(0, 0, 1/4, 0, 3/4), (0, 1/2, 0, 0, 1/2), (1/4, 1/4, 1/4, 1/5, 1/20)],

twenty nine $G$-invariant deformations 
(1, 1, 1, 1, 1) 
(0, 0, 0, 1, 16) (0, 0, 0, 2, 12) (0, 0, 0, 3, 8) (0, 0, 1, 0, 15) 
(0, 0, 1, 1, 11) (0, 0, 1, 2, 7) (0, 0, 1, 3, 3) (0, 0, 2, 0, 10) 
(0, 0, 2, 1, 6) (0, 0, 2, 2, 2) (0, 2, 0, 0, 10) (0, 2, 0, 1, 6) 
(0, 2, 0, 2, 2) (0, 2, 1, 0, 5) (0, 2, 1, 1, 1) (0, 2, 2, 0, 0) 
(1, 1, 0, 0, 10) (1, 1, 0, 1, 6) (1, 1, 0, 2, 2) (1, 1, 1, 0, 5) 
(1, 1, 2, 0, 0) (2, 0, 0, 0, 10) (2, 0, 0, 1, 6) (2, 0, 0, 2, 2) 
(2, 0, 1, 0, 5) (2, 0, 1, 1, 1) (2, 0, 2, 0, 0) (2, 2, 0, 0, 0); 

nine $G^T$-invariant deformations, 
(1, 1, 1, 1, 1) 
(0, 0, 0, 1, 16) (0, 0, 0, 2, 12) (0, 0, 0, 3, 8) (0, 0, 2, 0, 10) 
(0, 0, 2, 1, 6) (0, 0, 2, 2, 2) (1, 1, 1, 0, 5) (2, 2, 0, 0, 0);

six $(G,G^T)$- Roan's pairs, 
[(0, 0, 1/2, 1/5, 3/10), (1/4, 3/4, 0, 0, 0)] 
[(0, 0, 1/2, 1/5, 3/10), (1/2, 1/2, 0, 0, 0)] [(0, 0, 1/2, 1/5, 3/10), (3/4, 1/4, 0, 0, 0)] [(0, 0, 1/2, 2/5, 1/10), (1/4, 3/4, 0, 0, 0)] [(0, 0, 1/2, 2/5, 1/10), (1/2, 1/2, 0, 0, 0)] 
[(0, 0, 1/2, 2/5, 1/10), (3/4, 1/4, 0, 0, 0)];

and ten $(G^T,G)$-Roan's pairs, 
[(0, 0, 1/4, 1/5, 11/20), (1/2, 1/2, 0, 0, 0)] 
[(0, 0, 1/4, 2/5, 7/20), (1/2, 1/2, 0, 0, 0)] [(0, 0, 1/4, 3/5, 3/20), (1/2, 1/2, 0, 0, 0)] [(0, 0, 1/2, 1/5, 3/10), (1/2, 1/2, 0, 0, 0)] [(0, 0, 1/2, 2/5, 1/10), (1/2, 1/2, 0, 0, 0)] 
[(0, 0, 3/4, 1/5, 1/20), (1/2, 1/2, 0, 0, 0)] [(1/4, 1/4, 1/2, 0, 0), (0, 0, 0, 1/5, 4/5)] [(1/4, 1/4, 1/2, 0, 0), (0, 0, 0, 2/5, 3/5)] [(1/4, 1/4, 1/2, 0, 0), (0, 0, 0, 3/5, 2/5)] 
[(1/4, 1/4, 1/2, 0, 0), (0, 0, 0, 4/5, 1/5)].

There are two more isomorphic orbifolds with $Z_{20}\times Z_2$.

\subsection{}
For the type $(4,4,4,6,12)$, we have $r=4$, $a=4$; 
$h^{1,1}=9$, $h^{2,1}=45$, $\chi=-72$.  Note that there is no mirror in Tables \cite{GLY}.

There are three subgroups isomorphic to $Z_{24}\times Z_2$ corresponding to BV-construction with $r=a=4$, and four subgroups with another Hodge numbers and Euler characteristic either $-60$ or $-84$.

A good subgroups is with generators 
[(0, 0, 1/2, 0, 1/2), (1/4, 1/4, 1/4, 1/6, 1/12)];
the mirror groups has the generators 
 [(0, 0, 0, 1/6, 5/6), (0, 0, 1/4, 0, 3/4), (0, 1/2, 0, 0, 1/2), (1/4, 1/4, 1/4, 1/6, 1/12)];

There are 35 $G$-invariant deformations, 
(1, 1, 1, 1, 1) 
(0, 0, 0, 1, 10) (0, 0, 0, 2, 8) (0, 0, 0, 3, 6) (0, 0, 0, 4, 4) 
(0, 0, 1, 0, 9) (0, 0, 1, 1, 7) (0, 0, 1, 2, 5) (0, 0, 1, 3, 3) 
(0, 0, 1, 4, 1) (0, 0, 2, 0, 6) (0, 0, 2, 1, 4) (0, 0, 2, 2, 2) 
(0, 0, 2, 3, 0) (0, 2, 0, 0, 6) (0, 2, 0, 1, 4) (0, 2, 0, 2, 2) 
(0, 2, 0, 3, 0) (0, 2, 1, 0, 3) (0, 2, 1, 1, 1) (0, 2, 2, 0, 0) 
(1, 1, 0, 0, 6) (1, 1, 0, 1, 4) (1, 1, 0, 2, 2) (1, 1, 0, 3, 0) 
(1, 1, 1, 0, 3) (1, 1, 2, 0, 0) (2, 0, 0, 0, 6) (2, 0, 0, 1, 4) 
(2, 0, 0, 2, 2) (2, 0, 0, 3, 0) (2, 0, 1, 0, 3) (2, 0, 1, 1, 1) 
(2, 0, 2, 0, 0) (2, 2, 0, 0, 0) ;

six $G^T$-invariant deformations,
(1, 1, 1, 1, 1) 
(0, 0, 0, 2, 8) (0, 0, 0, 4, 4) (0, 0, 2, 0, 6) (0, 0, 2, 2, 2) 
(2, 2, 0, 0, 0) ;


three $(G,G^T)$-invariant pairs, 
[(0, 0, 1/2, 1/3, 1/6), (1/4, 3/4, 0, 0, 0)] 
[(0, 0, 1/2, 1/3, 1/6), (1/2, 1/2, 0, 0, 0)] [(0, 0, 1/2, 1/3, 1/6), (3/4, 1/4, 0, 0, 0)] ;

and ten $(G,G^T)$-invariant pairs, 
[(0, 0, 1/4, 1/6, 7/12), (1/2, 1/2, 0, 0, 0)] 
[(0, 0, 1/4, 1/3, 5/12), (1/2, 1/2, 0, 0, 0)] [(0, 0, 1/4, 1/2, 1/4), (1/2, 1/2, 0, 0, 0)] [(0, 0, 1/4, 2/3, 1/12), (1/2, 1/2, 0, 0, 0)] [(0, 0, 1/2, 1/6, 1/3), (1/2, 1/2, 0, 0, 0)] 
[(0, 0, 1/2, 1/3, 1/6), (1/2, 1/2, 0, 0, 0)] [(0, 0, 3/4, 1/6, 1/12), (1/2, 1/2, 0, 0, 0)] [(1/4, 1/4, 0, 1/2, 0), (0, 0, 1/2, 0, 1/2)] [(1/4, 1/4, 1/2, 0, 0), (0, 0, 0, 1/3, 2/3)] 
[(1/4, 1/4, 1/2, 0, 0), (0, 0, 0, 2/3, 1/3)].

Thus, we have $h^{1,1}(\widetilde{W_{4,4,4,5,20}/G}=9$, $h^{2,1}(\widetilde{W_{4,4,4,5,20}/G}=45$, and dual due to BHK is a mirror.

\subsection{}
For $K3$  of type $(2,3,8,24)$ with  $r=6$, $a=2$; 
BV-mirror is $z_1^4+z_2^4+z_3^3+z_5z_4^9 +z_5^{16}$. 

For  $W=z_1^3+z_2^4+z_3^4+z_4^8 +z_5^{24}$ quotient by 
$G=G_0\times Z_2=Z_{24}\times Z_2$ with generators 
[(0, 0, 0, 1/4, 3/4), (1/3, 1/4, 1/4, 1/8, 1/24)],
and 
the mirror $G^T$ with generators  [(0, 0, 0, 1/8, 7/8), (0, 0, 1/2, 0, 1/2), (1/3, 1/4, 1/4, 1/8, 1/24)],
we get the following deformations and Roan's pairs.

There are 34 $G$-invariant deformations, 
(1, 1, 1, 1, 1) 
(0, 0, 0, 1, 21) (0, 0, 0, 2, 18) (0, 0, 0, 3, 15) (0, 0, 0, 4, 12) 
(0, 0, 0, 5, 9) (0, 0, 0, 6, 6) (0, 0, 2, 0, 12) (0, 0, 2, 1, 9) 
(0, 0, 2, 2, 6) (0, 0, 2, 3, 3) (0, 0, 2, 4, 0) (0, 1, 1, 0, 12) 
(0, 1, 1, 1, 9) (0, 1, 1, 2, 6) (0, 1, 1, 3, 3) (0, 1, 1, 4, 0) 
(0, 2, 0, 0, 12) (0, 2, 0, 1, 9) (0, 2, 0, 2, 6) (0, 2, 0, 3, 3) 
(0, 2, 0, 4, 0) (0, 2, 2, 0, 0) (1, 0, 0, 0, 16) (1, 0, 0, 1, 13) 
(1, 0, 0, 2, 10) (1, 0, 0, 3, 7) (1, 0, 0, 4, 4) (1, 0, 0, 5, 1) 
(1, 0, 2, 0, 4) (1, 0, 2, 1, 1) (1, 1, 1, 0, 4) (1, 2, 0, 0, 4) 
(1, 2, 0, 1, 1). 

There are ten $G^T$-invariant deformations, 
(1, 1, 1, 1, 1) 
(0, 0, 0, 2, 18) (0, 0, 0, 4, 12) (0, 0, 0, 6, 6) (0, 1, 1, 1, 9) 
(0, 1, 1, 3, 3) (0, 2, 2, 0, 0) (1, 0, 0, 0, 16) (1, 0, 0, 2, 10) 
(1, 0, 0, 4, 4).


There are nine $(G,G^T)$-Roans pairs, 

[(1/3, 0, 0, 1/4, 5/12), (0, 1/4, 3/4, 0, 0)] 
[(1/3, 0, 0, 1/4, 5/12), (0, 1/2, 1/2, 0, 0)] [(1/3, 0, 0, 1/4, 5/12), (0, 3/4, 1/4, 0, 0)] [(1/3, 0, 0, 1/2, 1/6), (0, 1/4, 3/4, 0, 0)] [(1/3, 0, 0, 1/2, 1/6), (0, 1/2, 1/2, 0, 0)] 
[(1/3, 0, 0, 1/2, 1/6), (0, 3/4, 1/4, 0, 0)] [(2/3, 0, 0, 1/4, 1/12), (0, 1/4, 3/4, 0, 0)] [(2/3, 0, 0, 1/4, 1/12), (0, 1/2, 1/2, 0, 0)] [(2/3, 0, 0, 1/4, 1/12), (0, 3/4, 1/4, 0, 0)] .

There are nine $(G^T,G)$-Roans pairs, 
[(0, 1/4, 1/4, 1/2, 0), (1/3, 0, 0, 0, 2/3)] 
[(0, 1/4, 1/4, 1/2, 0), (2/3, 0, 0, 0, 1/3)] [(1/3, 0, 0, 1/8, 13/24), (0, 1/2, 1/2, 0, 0)] [(1/3, 0, 0, 1/4, 5/12), (0, 1/2, 1/2, 0, 0)] [(1/3, 0, 0, 3/8, 7/24), (0, 1/2, 1/2, 0, 0)] 
[(1/3, 0, 0, 1/2, 1/6), (0, 1/2, 1/2, 0, 0)] [(1/3, 0, 0, 5/8, 1/24), (0, 1/2, 1/2, 0, 0)] [(2/3, 0, 0, 1/8, 5/24), (0, 1/2, 1/2, 0, 0)] [(2/3, 0, 0, 1/4, 1/12), (0, 1/2, 1/2, 0, 0)].

We have $h^{1,1}(\widetilde{W_{3,4,4,5,8}/G}=19=5-4+18$, $h^{2,1}(\widetilde{W_{3,4,4,5,8}/G}=43=65-4-18$.

\subsection{}
For $K3$ with $(2,3,12,12)$, $r=2$, $a=0$, and  BV-mirror $z_1^4+z_2^4+z_3^3z_4+z_5z_4^8 +z_5^{11}$, 
there are three isomorphic subgroups on the BHK orbifold side.

Namely, we can quotient $W=z_1^3+z_2^4+z_3^4+z_4^{12} +z_5^{12}$ by the group 
$G=Z_{12}\times Z_2$ with generators 
[(0, 0, 0, 1/2, 1/2), (1/3, 1/4, 1/4, 1/12, 1/12)], has the mirror $G^T$ with generators
[(0, 0, 0, 1/12, 11/12), (0, 0, 1/2, 0, 1/2), (1/3, 1/4, 1/4, 1/12, 1/12)].

Then are $49$ $G$-invariant deformations, 
(1, 1, 1, 1, 1) 
(0, 0, 0, 2, 10) (0, 0, 0, 3, 9) (0, 0, 0, 4, 8) (0, 0, 0, 5, 7) 
(0, 0, 0, 6, 6) (0, 0, 0, 7, 5) (0, 0, 0, 8, 4) (0, 0, 0, 9, 3) 
(0, 0, 0, 10, 2) (0, 0, 2, 0, 6) (0, 0, 2, 1, 5) (0, 0, 2, 2, 4) 
(0, 0, 2, 3, 3) (0, 0, 2, 4, 2) (0, 0, 2, 5, 1) (0, 0, 2, 6, 0) 
(0, 1, 1, 0, 6) (0, 1, 1, 1, 5) (0, 1, 1, 2, 4) (0, 1, 1, 3, 3) 
(0, 1, 1, 4, 2) (0, 1, 1, 5, 1) (0, 1, 1, 6, 0) (0, 2, 0, 0, 6) 
(0, 2, 0, 1, 5) (0, 2, 0, 2, 4) (0, 2, 0, 3, 3) (0, 2, 0, 4, 2) 
(0, 2, 0, 5, 1) (0, 2, 0, 6, 0) (0, 2, 2, 0, 0) (1, 0, 0, 0, 8) 
(1, 0, 0, 1, 7) (1, 0, 0, 2, 6) (1, 0, 0, 3, 5) (1, 0, 0, 4, 4) 
(1, 0, 0, 5, 3) (1, 0, 0, 6, 2) (1, 0, 0, 7, 1) (1, 0, 0, 8, 0) 
(1, 0, 2, 0, 2) (1, 0, 2, 1, 1) (1, 0, 2, 2, 0) (1, 1, 1, 0, 2) 
(1, 1, 1, 2, 0) (1, 2, 0, 0, 2) (1, 2, 0, 1, 1) (1, 2, 0, 2, 0).

There are five $G^T$-invariant deformations, 
(1, 1, 1, 1, 1) 
(0, 0, 0, 6, 6) (0, 1, 1, 3, 3) (0, 2, 2, 0, 0) (1, 0, 0, 4, 4) 


There are six $(G,G^T)$-Roans pairs, 
[(1/3, 0, 0, 1/3, 1/3), (0, 1/4, 3/4, 0, 0)] 
[(1/3, 0, 0, 1/3, 1/3), (0, 1/2, 1/2, 0, 0)] [(1/3, 0, 0, 1/3, 1/3), (0, 3/4, 1/4, 0, 0)] [(2/3, 0, 0, 1/6, 1/6), (0, 1/4, 3/4, 0, 0)] [(2/3, 0, 0, 1/6, 1/6), (0, 1/2, 1/2, 0, 0)] 
[(2/3, 0, 0, 1/6, 1/6), (0, 3/4, 1/4, 0, 0)].

There are ten $(G^T,G)$-Roans pairs, 
[(1/3, 0, 0, 1/12, 7/12), (0, 1/2, 1/2, 0, 0)] 
[(1/3, 0, 0, 1/6, 1/2), (0, 1/2, 1/2, 0, 0)] [(1/3, 0, 0, 1/4, 5/12), (0, 1/2, 1/2, 0, 0)] [(1/3, 0, 0, 1/3, 1/3), (0, 1/2, 1/2, 0, 0)] [(1/3, 0, 0, 5/12, 1/4), (0, 1/2, 1/2, 0, 0)] 
[(1/3, 0, 0, 1/2, 1/6), (0, 1/2, 1/2, 0, 0)] [(1/3, 0, 0, 7/12, 1/12), (0, 1/2, 1/2, 0, 0)] [(2/3, 0, 0, 1/12, 1/4), (0, 1/2, 1/2, 0, 0)] [(2/3, 0, 0, 1/6, 1/6), (0, 1/2, 1/2, 0, 0)] 
[(2/3, 0, 0, 1/4, 1/12), (0, 1/2, 1/2, 0, 0)].

Thus, $h^{1,1}(\widetilde{W_{3,4,4,12,12}/G}=11=5+6$, $h^{2,1}(\widetilde{W_{3,4,4,12,12}/G}=59=65-6$.

\subsection{}
 For  $K3$  with $(2,5,5,10)$, $r=6$, $a=4$, and BV-mirror $z_1^4+z_2^4+z_3^5+z_5z_4^5 +z_5^8$; there is only one subgroup of size 
$40$, and for the orbifold obtained by quotient 
$W=z_1^4+z_2^4 +z_3^5 +z_4^5 +z_5^{10}$, by the group
with the generators 
 [(0, 1/2, 0, 0, 1/2), (1/4, 1/4, 1/5, 1/5, 1/10)],
 ( the mirror groups $G^T$ has generators  [(0, 0, 0, 1/5, 4/5), (1/4, 1/4, 1/5, 1/5, 1/10)]), 
 we have the following.

There are $21$ $G$-invariant deformations, 
(1, 1, 1, 1, 1) 
(0, 0, 0, 1, 8) (0, 0, 0, 2, 6) (0, 0, 0, 3, 4) (0, 0, 1, 0, 8) 
(0, 0, 1, 1, 6) (0, 0, 1, 2, 4) (0, 0, 1, 3, 2) (0, 0, 2, 0, 6) 
(0, 0, 2, 1, 4) (0, 0, 2, 2, 2) (0, 0, 2, 3, 0) (0, 0, 3, 0, 4) 
(0, 0, 3, 1, 2) (0, 0, 3, 2, 0) (1, 1, 0, 0, 5) (1, 1, 0, 1, 3) 
(1, 1, 0, 2, 1) (1, 1, 1, 0, 3) (1, 1, 2, 0, 1) (2, 2, 0, 0, 0).  

There are $9$ $G^T$-invariant deformations, 
(1, 1, 1, 1, 1) 
(0, 0, 1, 1, 6) (0, 0, 2, 2, 2) (0, 2, 0, 0, 5) (0, 2, 1, 1, 1) 
(1, 1, 0, 0, 5) (2, 0, 0, 0, 5) (2, 0, 1, 1, 1) (2, 2, 0, 0, 0) .


There are six $(G,G^T)$-Roans pairs, 
[(0, 0, 1/5, 1/5, 3/5), (1/2, 1/2, 0, 0, 0)] 
[(0, 0, 2/5, 2/5, 1/5), (1/2, 1/2, 0, 0, 0)] [(1/4, 1/4, 0, 0, 1/2), (0, 0, 1/5, 4/5, 0)] [(1/4, 1/4, 0, 0, 1/2), (0, 0, 2/5, 3/5, 0)] [(1/4, 1/4, 0, 0, 1/2), (0, 0, 3/5, 2/5, 0)] 
[(1/4, 1/4, 0, 0, 1/2), (0, 0, 4/5, 1/5, 0)] ;

and there are $18$ $(G,G^T)$-Roans pairs, 
[(0, 0, 1/5, 1/5, 3/5), (1/4, 3/4, 0, 0, 0)] 
[(0, 0, 1/5, 1/5, 3/5), (1/2, 1/2, 0, 0, 0)] [(0, 0, 1/5, 1/5, 3/5), (3/4, 1/4, 0, 0, 0)] [(0, 0, 1/5, 2/5, 2/5), (1/4, 3/4, 0, 0, 0)] [(0, 0, 1/5, 2/5, 2/5), (1/2, 1/2, 0, 0, 0)] 
[(0, 0, 1/5, 2/5, 2/5), (3/4, 1/4, 0, 0, 0)] [(0, 0, 1/5, 3/5, 1/5), (1/4, 3/4, 0, 0, 0)] [(0, 0, 1/5, 3/5, 1/5), (1/2, 1/2, 0, 0, 0)] [(0, 0, 1/5, 3/5, 1/5), (3/4, 1/4, 0, 0, 0)] 
[(0, 0, 2/5, 1/5, 2/5), (1/4, 3/4, 0, 0, 0)] [(0, 0, 2/5, 1/5, 2/5), (1/2, 1/2, 0, 0, 0)] [(0, 0, 2/5, 1/5, 2/5), (3/4, 1/4, 0, 0, 0)] [(0, 0, 2/5, 2/5, 1/5), (1/4, 3/4, 0, 0, 0)] 
[(0, 0, 2/5, 2/5, 1/5), (1/2, 1/2, 0, 0, 0)] [(0, 0, 2/5, 2/5, 1/5), (3/4, 1/4, 0, 0, 0)] [(0, 0, 3/5, 1/5, 1/5), (1/4, 3/4, 0, 0, 0)] [(0, 0, 3/5, 1/5, 1/5), (1/2, 1/2, 0, 0, 0)] 
[(0, 0, 3/5, 1/5, 1/5), (3/4, 1/4, 0, 0, 0)].

Thus, $h^{1,1}(\widetilde{W_{3,4,4,12,12}/G})=15=5+10$, $h^{2,1}(\widetilde{W_{3,4,4,12,12}/G})=39=65-8-18$.

\subsection{} 
For 
 $K3$ surface with $(2,3,9,18)$, $r=6$, $a=2$; mirror $z_1^4+z_2^4+z_3^3+z_5z_4^8+z_5^{21}$,
 and, for Fermat threefold 
 \[W=z_1^3+z_2^4 +z_3^4 +z_4^9 +z_5^{18},\]
  there is only one subgroup $G=Z_{36}\times Z_2\subset G^{\max}$ of size 
$72$.

Here are the generators of $G$: [(0, 0, 1/2, 0, 1/2), (1/3, 1/4, 1/4, 1/9, 1/18)], 
and  the generators of $G^T$ are  [(0, 0, 0, 1/9, 8/9), (1/3, 1/4, 1/4, 1/9, 1/18)].

There are $22$ $G$-invariant deformations, 
(1, 1, 1, 1, 1) 
(0, 0, 0, 1, 16) (0, 0, 0, 2, 14) (0, 0, 0, 3, 12) (0, 0, 0, 4, 10) 
(0, 0, 0, 5, 8) (0, 0, 0, 6, 6) (0, 0, 0, 7, 4) (0, 1, 1, 0, 9) 
(0, 1, 1, 1, 7) (0, 1, 1, 2, 5) (0, 1, 1, 3, 3) (0, 1, 1, 4, 1) 
(0, 2, 2, 0, 0) (1, 0, 0, 0, 12) (1, 0, 0, 1, 10) (1, 0, 0, 2, 8) 
(1, 0, 0, 3, 6) (1, 0, 0, 4, 4) (1, 0, 0, 5, 2) (1, 0, 0, 6, 0) 
(1, 1, 1, 0, 3). 

Then are $14$ $G^T$-invariant deformations, 
(1, 1, 1, 1, 1) 
(0, 0, 0, 3, 12) (0, 0, 0, 6, 6) (0, 0, 2, 0, 9) (0, 0, 2, 3, 3) 
(0, 1, 1, 0, 9) (0, 1, 1, 3, 3) (0, 2, 0, 0, 9) (0, 2, 0, 3, 3) 
(0, 2, 2, 0, 0) (1, 0, 0, 1, 10) (1, 0, 0, 4, 4) (1, 0, 2, 1, 1) 
(1, 2, 0, 1, 1). 


There are five $(G,G^T)$-Roans pairs, 
[(0, 1/4, 1/4, 0, 1/2), (1/3, 0, 0, 2/3, 0)] 
[(0, 1/4, 1/4, 0, 1/2), (2/3, 0, 0, 1/3, 0)] [(1/3, 0, 0, 1/9, 5/9), (0, 1/2, 1/2, 0, 0)] [(1/3, 0, 0, 4/9, 2/9), (0, 1/2, 1/2, 0, 0)] [(2/3, 0, 0, 2/9, 1/9), (0, 1/2, 1/2, 0, 0)] 

There are $21$ $(G^T,G)$-Roans pairs, 
[(1/3, 0, 0, 1/9, 5/9), (0, 1/4, 3/4, 0, 0)] 
[(1/3, 0, 0, 1/9, 5/9), (0, 1/2, 1/2, 0, 0)] [(1/3, 0, 0, 1/9, 5/9), (0, 3/4, 1/4, 0, 0)] [(1/3, 0, 0, 2/9, 4/9), (0, 1/4, 3/4, 0, 0)] [(1/3, 0, 0, 2/9, 4/9), (0, 1/2, 1/2, 0, 0)] 
[(1/3, 0, 0, 2/9, 4/9), (0, 3/4, 1/4, 0, 0)] [(1/3, 0, 0, 1/3, 1/3), (0, 1/4, 3/4, 0, 0)] [(1/3, 0, 0, 1/3, 1/3), (0, 1/2, 1/2, 0, 0)] [(1/3, 0, 0, 1/3, 1/3), (0, 3/4, 1/4, 0, 0)] 
[(1/3, 0, 0, 4/9, 2/9), (0, 1/4, 3/4, 0, 0)] [(1/3, 0, 0, 4/9, 2/9), (0, 1/2, 1/2, 0, 0)] [(1/3, 0, 0, 4/9, 2/9), (0, 3/4, 1/4, 0, 0)] [(1/3, 0, 0, 5/9, 1/9), (0, 1/4, 3/4, 0, 0)] 
[(1/3, 0, 0, 5/9, 1/9), (0, 1/2, 1/2, 0, 0)] [(1/3, 0, 0, 5/9, 1/9), (0, 3/4, 1/4, 0, 0)] [(2/3, 0, 0, 1/9, 2/9), (0, 1/4, 3/4, 0, 0)] [(2/3, 0, 0, 1/9, 2/9), (0, 1/2, 1/2, 0, 0)] 
[(2/3, 0, 0, 1/9, 2/9), (0, 3/4, 1/4, 0, 0)] [(2/3, 0, 0, 2/9, 1/9), (0, 1/4, 3/4, 0, 0)] [(2/3, 0, 0, 2/9, 1/9), (0, 1/2, 1/2, 0, 0)] [(2/3, 0, 0, 2/9, 1/9), (0, 3/4, 1/4, 0, 0)].

Thus  $h^{1,1}(\widetilde{W_{3,4,4,9,18}/G}=19=5-4+18$, $h^{2,1}(\widetilde{W_{3,4,4,9,18}/G}=43=65-4-18$.

\section{The Roan pairs for orbifolds of Calabi-Yau surfaces Fermat type ($K3$ surfaces of Fermat type) }

Recall that two dimensional Calabi-Yau are $K3$ surfaces. The Hodge diamond of $K3$ surfaces is the following 
\[
\begin{array}{cccccccc}
& & 1& & \cr
&0& & 0 & &\cr
1& &20 & & 1\cr
&0& & 0 & &\cr
& & 1& & \cr
\end{array}
\]  
Here, for any $K3$ surface $S$ of Fermat type,   we obtain decomposition of the cohomology  $H^2(\widetilde{S/G}, \mathbb Z)$ for all orbifolds $S$, where  $\widetilde{S/G}$ denote desingularization of $S/G$, $G\subset G^{\max}$,
 into untwisted and twisted parts.  
 
 Let us note, that  for all orbifolds of $K3$ surfaces Fermat type, the Hodge diamond is   as above.
 
 For this we define the Roan pairs. 
 
 \begin{dfn}
A pair  $(\mathbf z', \mathbf z)\in G^T_{2Z}\times G_{Z}$ is a $(G^T, G)$- {\em Roan's pair} if 

\[
\sum  z'_i\,a_i\, z_i=0. 
\]

A pair   $(\mathbf z, \mathbf z')\in G_{2Z}\times G^T_{2Z}$ is a $(G, G^T)-$ {\em Roan's pair} if 
\[
\sum z_i\,a_i \,z'_i=0.
\]

     For any Fermat CY $K3$ surface  and a group $G_0\subset G\subset G^{\max}$, we set 
\begin{equation}\label{T1}
\mbox{untiwisted  part of } H^{1,1}(\widetilde{S/G})=\{ G-\mbox{ invariant deformations}\} ;
\end{equation}

\begin{equation}\label{T2}
\mbox{twisted  part of }H^{1,1}(\widetilde{S/G})=\{ G^T-\mbox{invariant deformations}\} \cup 
\{ (G^T, G) \mbox{ Roan's pairs}\}. 
\end{equation}
\end{dfn}

Due to the above definition, we get that under the BHK mirror symmetry,  the untwisted part is invariant and, for the twisted,   $(G^T, G)$ Roan's pairs swap with $(G, G^T)$ Roan's pairs.

\begin{thm}\label{K3Roan}
For any $K3$ surface $S$ of Fermat type and any orbifold $S/G$, for the Euler numbers we have $\chi(\widetilde{S/G})=24$ according to the Vafa formula, and 
there holds 
\[
20=h^{1,1}=|\mbox{untiwisted  part of } H^{1,1}(\widetilde{S/G})|+ |\mbox{twisted  part of }H^{1,1}(\widetilde{S/G})|,
\]
where untwisted and twisted parts are (\ref{T1}) and (\ref{T2}), respectively.
\end{thm}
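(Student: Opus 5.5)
The plan has two independent parts: first show that the Vafa formula gives $24$ for every orbifold, then establish the count $20$ for the untwisted-plus-twisted decomposition. In the style of Theorem \ref{main}, the second part rests ultimately on explicit enumeration over the finitely many cases. There are $14$ solutions of $\sum_{i=1}^4 1/a_i=1$, and for each one the lattice of admissible subgroups $G_0\subset G\subset G^{\max}$ is finite and computable. For each pair $(G,G^T)$ I will list the level-one elements of $G$ and $G^T$ with the prescribed zero patterns, the invariant deformation sets $S^G={\mathbf e}^{-1}(G^T)\cap S_{a_1,\dots,a_4}$ and $S^{G^T}$, and the Roan pairs via the orthogonality $\sum z_i' a_i z_i=0$. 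I will then check that the three counts sum to $20$. This check is repeated for all subgroups, up to the $S_4$-symmetry permuting equal exponents, and the results are recorded in the database \cite{DB}.

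For the Euler number I would try a structural argument rather than pure computation. In (\ref{vafa2}) with $m=4$, group the pairs $(g_1,g_2)$ by the subgroup $\langle g_1,g_2\rangle$. The summand depends only on the set $I$ of coordinates fixed by both elements, and contributes $\prod_{i\in I}(1-a_i)$. Since $G\subset SL_4$ contains $G_0$ and acts on a surface, a nontrivial element of $G/G_0$ fixes either nothing or isolated points, and each isolated point has a cyclic $SL_2$-type stabilizer. The Vafa sum then reorganizes as
\[
\chi(\widetilde{S/G})=\frac{\chi_{\rm orb}\text{-part}}{|G/G_0|}+\sum_{\text{fixed points } p}\bigl(|{\rm Stab}(p)|-\tfrac{1}{|{\rm Stab}(p)|}\bigr),
\]
which is exactly the Euler number of the minimal resolution of an $A_{n-1}$ configuration. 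Because the minimal resolution of a $K3$ quotient with symplectic action is again $K3$, the answer is $24$. Concretely, I will verify that $\chi(S/G_0)=24$ from (\ref{vafa3-1}) and that each $Z_n$ fixed point adds $n-1/n$ to the orbifold correction, so that the sum telescopes to $24$. As a safety net, the same enumeration as above evaluates (\ref{vafa2}) directly.

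For the decomposition I would use a heuristic that also organizes the computation. Untwisted classes come from $G$-invariant deformations, while the twisted ones come from exceptional curves. An $A_{n-1}$ point gives $n-1$ curves, and these should match Roan pairs built from a $G_Z$ element of order $n$ paired with $G^T_{2Z}$ elements of the orthogonal face. The $G^T$-invariant deformations supply the remaining invariant algebraic classes. The main obstacle is this bijection: proving in general that the orthogonality condition $\sum z_i'a_iz_i=0$ picks out precisely $n-1$ partners per singular point is not evident, and the paper's definitions are asymmetric ($G_Z$ versus $G_{2Z}$). I would therefore fall back on exhaustive machine verification over all $14$ types and their subgroups for the identity $|S^G|+|S^{G^T}|+\#(G^T,G)\text{-pairs}=20$, treating the structural matching only as supporting explanation.
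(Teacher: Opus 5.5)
Your argument for the decomposition $|S^G|+|S^{G^T}|+\#\{(G^T,G)\text{ Roan pairs}\}=20$ is the paper's proof. The paper also establishes it only by a finite case-by-case computation over the 14 Fermat types and their admissible subgroups, recorded in \cite{DBK3}; note that \cite{DB} is the threefold database.

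For the Euler number you take a different, conceptual route. You use $G\subset SL_4$ to make $H=G/G_0$ act symplectically, so the minimal resolution of $S/G$ is a K3 surface and $\chi=24$ holds for every $G$ at once; the paper instead just evaluates (\ref{vafa2}) case by case. Your route gives a uniform explanation that the computation lacks, but it costs you three things:
\begin{itemize}
\item You must import the identification of the Landau--Ginzburg-type sum (\ref{vafa2}) with the geometric Euler number of the crepant resolution, rather than prove it.
\item You must first pass to the minimal resolution of $S$ itself, since $S$ is already singular in weighted cases such as $(2,3,10,15)$.
\item Your correction term must be summed over singular points of the quotient, $\chi(\widetilde{S/H})=\chi(S)/|H|+\sum_q(n_q-1/n_q)$, not over fixed points upstairs. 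This identity is a consequence of the resolution being K3, not a telescoping.
\end{itemize}

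Two remarks on the decomposition. First, the $G_Z$ versus $G_{2Z}$ asymmetry you worry about is immaterial. The condition $\sum z_i'a_iz_i=0$ with all entries in $[0,1)$ forces disjoint supports, and a level-one vector has at least two nonzero entries. So both members of a pair automatically have complementary two-element supports.

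Second, your heuristic that Roan pairs count exceptional curves is false, not merely unproven. For $G=G^{\max}$ on the quartic there are no Roan pairs, yet $S/(Z_4\times Z_4)$ has six $A_3$ points, i.e.\ 18 exceptional curves. In the paper's bookkeeping the twisted part there consists entirely of the 19 $G^T$-invariant deformations. So the heuristic should not be used to organize the enumeration.
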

{\em Proof}. The proof is going by case-by-case checking, and  documented in \cite{DBK3}.
\hfill $\Box$
 
\begin{ex}
Consider $K3$ surface 
\[
w^4+x^4+y^4+z^4=0
\]
in $\mathbb P(1,1,1,1)$.
\end{ex}
Then \begin{enumerate}
\item
For $G=G^{\max}$, $G=Z_4\times Z_4\times Z_4$,  we have the following.

The generators $G$ are [(0, 0, 1/4, 3/4), (0, 1/4, 0, 3/4), (1/4, 1/4, 1/4, 1/4)], and $G_0=Z_4$ has the generator [(1/4, 1/4, 1/4, 1/4)].

There is only one $G$-invariant deformation  (1, 1, 1, 1)  and $19$ $G_0$-invariant deformations, 
(1, 1, 1, 1) 
(0, 0, 2, 2) (0, 1, 1, 2) (0, 1, 2, 1) (0, 2, 0, 2) 
(0, 2, 1, 1) (0, 2, 2, 0) (1, 0, 1, 2) (1, 0, 2, 1) 
(1, 1, 0, 2) (1, 1, 2, 0) (1, 2, 0, 1) (1, 2, 1, 0) 
(2, 0, 0, 2) (2, 0, 1, 1) (2, 0, 2, 0) (2, 1, 0, 1) 
(2, 1, 1, 0) (2, 2, 0, 0).

There are no Roan's pairs. 

\item
For $G=Z_4\times Z_4$ with generators  [(0, 1/2, 1/4, 1/4), (1/4, 1/4, 1/4, 1/4)], $G^T=Z_4\times Z_4$ has generators 
[(0, 0, 1/4, 3/4), (1/4, 1/4, 1/4, 1/4)].

Then there are five $G$-invariant deformations,
(1, 1, 1, 1) 
(0, 0, 2, 2) (1, 1, 0, 2) (1, 1, 2, 0) (2, 2, 0, 0);

there are five $G^T$-invariant deformations
(1, 1, 1, 1) 
(0, 0, 2, 2) (0, 2, 1, 1) (2, 0, 1, 1) (2, 2, 0, 0).

There are ten  $(G,G^T)$-Roan's pairs 
[(0, 0, 1/2, 1/2), (1/2, 1/2, 0, 0)] 
[(1/4, 3/4, 0, 0), (0, 0, 1/4, 3/4)] [(1/4, 3/4, 0, 0), (0, 0, 1/2, 1/2)] [(1/4, 3/4, 0, 0), (0, 0, 3/4, 1/4)] [(1/2, 1/2, 0, 0), (0, 0, 1/4, 3/4)] 
[(1/2, 1/2, 0, 0), (0, 0, 1/2, 1/2)] [(1/2, 1/2, 0, 0), (0, 0, 3/4, 1/4)] [(3/4, 1/4, 0, 0), (0, 0, 1/4, 3/4)] [(3/4, 1/4, 0, 0), (0, 0, 1/2, 1/2)] 
[(3/4, 1/4, 0, 0), (0, 0, 3/4, 1/4)];

and  ten  $(G^T,G)$-Roan's pairs 
[(0, 0, 1/4, 3/4), (1/4, 3/4, 0, 0)] 
[(0, 0, 1/4, 3/4), (1/2, 1/2, 0, 0)] [(0, 0, 1/4, 3/4), (3/4, 1/4, 0, 0)] [(0, 0, 1/2, 1/2), (1/4, 3/4, 0, 0)] [(0, 0, 1/2, 1/2), (1/2, 1/2, 0, 0)] 
[(0, 0, 1/2, 1/2), (3/4, 1/4, 0, 0)] [(0, 0, 3/4, 1/4), (1/4, 3/4, 0, 0)] [(0, 0, 3/4, 1/4), (1/2, 1/2, 0, 0)] [(0, 0, 3/4, 1/4), (3/4, 1/4, 0, 0)] 
[(1/2, 1/2, 0, 0), (0, 0, 1/2, 1/2)] 

\item
For $G=Z_4\times Z_4\times Z_2$ with generators [(0, 0, 1/4, 3/4), (0, 1/2, 0, 1/2), (1/4, 1/4, 1/4, 1/4)], the mirror group $G^T=Z_4\times Z_2$
has generators  [(0, 0, 1/2, 1/2), (1/4, 1/4, 1/4, 1/4)].

Then there are tree $G$-invariant deformations, 
(1, 1, 1, 1) 
(0, 0, 2, 2) (2, 2, 0, 0) ;

there are $11$ $G^T$-invariant deformations
(1, 1, 1, 1) 
(0, 0, 2, 2) (0, 2, 0, 2) (0, 2, 1, 1) (0, 2, 2, 0) 
(1, 1, 0, 2) (1, 1, 2, 0) (2, 0, 0, 2) (2, 0, 1, 1)
(2, 0, 2, 0) (2, 2, 0, 0);

and there are six  $(G,G^T)$-Roan's pairs, 

[(0, 0, 1/4, 3/4), (1/2, 1/2, 0, 0)] 
[(0, 0, 1/2, 1/2), (1/2, 1/2, 0, 0)] [(0, 0, 3/4, 1/4), (1/2, 1/2, 0, 0)] [(1/4, 3/4, 0, 0), (0, 0, 1/2, 1/2)] [(1/2, 1/2, 0, 0), (0, 0, 1/2, 1/2)] 
[(3/4, 1/4, 0, 0), (0, 0, 1/2, 1/2)]; 

and there are six  $(G^T,G)$-Roan's pairs,
[(0, 0, 1/2, 1/2), (1/4, 3/4, 0, 0)] 
[(0, 0, 1/2, 1/2), (1/2, 1/2, 0, 0)] [(0, 0, 1/2, 1/2), (3/4, 1/4, 0, 0)] [(1/2, 1/2, 0, 0), (0, 0, 1/4, 3/4)] [(1/2, 1/2, 0, 0), (0, 0, 1/2, 1/2)] 
[(1/2, 1/2, 0, 0), (0, 0, 3/4, 1/4)] 
\item 
For $G=Z_4$ with generator [(1/4, 1/4, 1/4, 1/4)], and the mirror $G^T=Z_4\times Z_4\times Z_4$ with generators. [(0, 0, 1/4, 3/4), (0, 1/4, 0, 3/4), (1/4, 1/4, 1/4, 1/4)]

We have $19$ $G$-invariant deformations, 
(1, 1, 1, 1) 
(0, 0, 2, 2) (0, 1, 1, 2) (0, 1, 2, 1) (0, 2, 0, 2) 
(0, 2, 1, 1) (0, 2, 2, 0) (1, 0, 1, 2) (1, 0, 2, 1) 
(1, 1, 0, 2) (1, 1, 2, 0) (1, 2, 0, 1) (1, 2, 1, 0) 
(2, 0, 0, 2) (2, 0, 1, 1) (2, 0, 2, 0) (2, 1, 0, 1) 
(2, 1, 1, 0) (2, 2, 0, 0) ;

one $G^T$-invariant deformation 
(1, 1, 1, 1),

and no Roan's pairs.

This is the mirror to first orbifold.

\item For 
$G=Z_4\times Z_2$ with generators  [(0, 0, 1/2, 1/2), (1/4, 1/4, 1/4, 1/4)], the mirror $G^T$ has generators [(0, 0, 1/4, 3/4), (0, 1/2, 0, 1/2), (1/4, 1/4, 1/4, 1/4)], 

there are $11$ $G$-invariant deformations 
(1, 1, 1, 1) 
(0, 0, 2, 2) (0, 2, 0, 2) (0, 2, 1, 1) (0, 2, 2, 0) 
(1, 1, 0, 2) (1, 1, 2, 0) (2, 0, 0, 2) (2, 0, 1, 1) 
(2, 0, 2, 0) (2, 2, 0, 0);

there are three $G^T$-invariant deformations 
(0, 0, 2, 2) (2, 2, 0, 0);

and there are six  $(G,G^T)$-Roan's pairs
[(0, 0, 1/2, 1/2), (1/4, 3/4, 0, 0)] 
[(0, 0, 1/2, 1/2), (1/2, 1/2, 0, 0)] [(0, 0, 1/2, 1/2), (3/4, 1/4, 0, 0)] [(1/2, 1/2, 0, 0), (0, 0, 1/4, 3/4)] [(1/2, 1/2, 0, 0), (0, 0, 1/2, 1/2)] 
[(1/2, 1/2, 0, 0), (0, 0, 3/4, 1/4)];

and there are six  $(G^T,G)$-Roan's pairs
[(0, 0, 1/4, 3/4), (1/2, 1/2, 0, 0)] 
[(0, 0, 1/2, 1/2), (1/2, 1/2, 0, 0)] [(0, 0, 3/4, 1/4), (1/2, 1/2, 0, 0)] [(1/4, 3/4, 0, 0), (0, 0, 1/2, 1/2)] [(1/2, 1/2, 0, 0), (0, 0, 1/2, 1/2)] 
[(3/4, 1/4, 0, 0), (0, 0, 1/2, 1/2)].

This is mirror to the third orbifold.
\item
For $G=Z_4\times Z_4$ with the generators  [(0, 0, 1/2, 1/2), (0, 1/2, 0, 1/2), (1/4, 1/4, 1/4, 1/4)], the mirror $G^T=Z_4\times Z_4$ with generators  [(0, 0, 1/2, 1/2), (0, 1/2, 0, 1/2), (1/4, 1/4, 1/4, 1/4)],
we have 
seven $G$-invariant deformations, 
(1, 1, 1, 1) 
(0, 0, 2, 2) (0, 2, 0, 2) (0, 2, 2, 0) (2, 0, 0, 2) 
(2, 0, 2, 0) (2, 2, 0, 0) ;

seven $G^T$-invariant deformations, 
(1, 1, 1, 1) 
(0, 0, 2, 2) (0, 2, 0, 2) (0, 2, 2, 0) (2, 0, 0, 2) 
(2, 0, 2, 0) (2, 2, 0, 0) ;

and six  $(G,G^T)$-Roan's pairs 
[(0, 0, 1/2, 1/2), (1/2, 1/2, 0, 0)] 
[(0, 1/2, 0, 1/2), (1/2, 0, 1/2, 0)] [(0, 1/2, 1/2, 0), (1/2, 0, 0, 1/2)] [(1/2, 0, 0, 1/2), (0, 1/2, 1/2, 0)] [(1/2, 0, 1/2, 0), (0, 1/2, 0, 1/2)] 
[(1/2, 1/2, 0, 0), (0, 0, 1/2, 1/2)], 

and six  $(G^T,G)$-Roan's pairs 
[(0, 0, 1/2, 1/2), (1/2, 1/2, 0, 0)] 
[(0, 1/2, 0, 1/2), (1/2, 0, 1/2, 0)] [(0, 1/2, 1/2, 0), (1/2, 0, 0, 1/2)] [(1/2, 0, 0, 1/2), (0, 1/2, 1/2, 0)] [(1/2, 0, 1/2, 0), (0, 1/2, 0, 1/2)] 
[(1/2, 1/2, 0, 0), (0, 0, 1/2, 1/2)] 

This orbifold is self dual with respect to mirror symmetry.
\end{enumerate}
\begin{ex}
    Consider $K3$ surface 
    \[
    w^2+x^3+y^{10}+z^{15}=0
    \]
    in the weighted projective space $\mathbb P(15, 10,3,2)$.
    Then $G^{\max}=G_0=Z_{30}$.
\end{ex}
There are eight $G^{\max}$-invariant (0, 0, 2, 12) (0, 0, 4, 9) (0, 0, 6, 6) (0, 0, 8, 3) 
(0, 1, 0, 10) (0, 1, 2, 7) (0, 1, 4, 4) (0, 1, 6, 1);

the same for $G_0$; 

there are four  $(G,G^T)$- Roan's pairs
[(0, 1/3, 0, 2/3), (1/2, 0, 1/2, 0)] 
[(0, 2/3, 0, 1/3), (1/2, 0, 1/2, 0)] [(1/2, 0, 1/2, 0), (0, 1/3, 0, 2/3)] [(1/2, 0, 1/2, 0), (0, 2/3, 0, 1/3)] 

and the same set of $(G^,G)$- Roan's pairs. 

Thus, we get the self mirror $K3$ surface as claimed in \cite{Borcea}. 

\section{Conclusion}


Calabi-Yau varieties of the form the product of a Calabi-Yau variety and ellipric curve are of importance for rwelve dimensional $F$-theory \cite{VafaF}. Namely, for dimensions  $D=8$ and $6$, Vafa \cite{VafaF}, argued on dualities 
\begin{equation}\label{FT1}
F(K3)\leftrightarrow  Het (E),\mbox{ and } F(CY_3) \leftrightarrow  Het (K3)
\end{equation}
respectively, where $CY_k$ denotes Calabi-Yau $k$-fold, 
$F(CY_n)$ denotes F-theory in $12$ dimension with compactification on $CY_n$, and $Het(CY_{n-1})$ denotes 10 dimensional heterotic string with compactification $CY_{n-1}$.

In \cite{BS} it was discussed possibilities for duality in dimension $D=4$,
\begin{equation}\label{FT2}
F(CY_4) \leftrightarrow  Het (CY_3).
\end{equation}
For such dualities we propose to $CY_n$ considered as an extension of the above relation Barcea-Voisin construction and BHK mirror symmetry for $CY_3$.

Specifically, for $D=8$ and an elliptic curve, for the duality, we may consider $K3$ surface as the desingularization of product two elliptic curves in $\mathbb P(3,2,1)$ and 
and $\mathbb P(2,1,1)$ (see  of (\ref{E2}) and (\ref{E3}) quotient by the involution of order $2$. The corresponding orbifold of Fermat twofold is obtained from
\[
W=w^4+u^4+y^3+z^6
\]
by quotient by the group $G^{\max}=Z_2\times Z_{12}$. The point is that, for this orbifold, the dimension of untwisted part of $H^{1,1}$ is $4$ and twisted is of dimension $16$, and $16$ is the number of exceptional divisors of the desingularization.

For $D=6$ and ten $K3$ surfaces of Theorem \ref{BV-BHK}, for the F-duality, one may consider the corresponding Fermat orbifolds of this theorem.

For $D=4$, we may expect for $108$ Fermat threefolds of the type $(2, a,b,c,d)$ which will give duality, F-theory for orbifolds of Fermat fourfold of type
$(4,4,a,b,c,d)$ and heterotic string for Fermat threefold of type $(2, a,b,c,d)$. It would be important to have a combinatorial formula for the Euler number of 
such orbifolds of fourfolds.

In \cite{BS} it was proposed to relate a Fermat threefold $(a_1,a_2,a_3,a_4,a_5)$ with the Fermat fourfold $(a_1,a_2,a_3,a_4, 2a_5, 2a_5)$.
Note that if to follow such a way for $K3$ surfaces of Fermat type, no orbifolds of such threefolds will have the Hodge numbers as in Theorem \ref{BV-BHK}. This can be directly checked of \cite{DB}.

Recall, that the Euler number of fourfold CY $X$ is computed by the following formula
\[
\chi(X)=6(8+h^{1,1}(X)+h^{3,1}(X)-h^{2,1}(X)),
\]
where
$h^{3,1}$ counts the complex structures and $h^{1,1}$ counts K\"ahler moduli.

For orbifolds Fermat CY  fourfolds we expect that the Hodge numbers and  $G$, $G^T$-invariant deformations and the Roan pairs are related as following.

For Fermat CY fourfold 
\[W=z_1^{a_1}+\cdots +z_6^{a_6}
\]
and a admissible group $G_0\subset G\subset G^{\max}$, we  define the Roan pairs $R^{4,2}$, $R^{3,3}$ and $R^{2,4}$.

The set $R^{4,2}$ is constituted from pairs $\mathbf z, \mathbf z'$ of elements of level one of $G$ and $G^T$, respectively, such that 
$\mathbf z$ has 4 zero coordinates, and $\mathbf z'$ has 2 zero coordinates and they are orthogonal.

The set $R^{2,4}$ is constituted from pairs $\mathbf z, \mathbf z'$ of elements of level one of $G$ and $G^T$, respectively, such that 
$\mathbf z$ has 2 zero coordinates, and $\mathbf z'$ has 4 zero coordinates and they are orthogonal.

The set $R^{3,3}$ is constituted from pairs $\mathbf z, \mathbf z'$ of elements of level one of $G$ and $G^T$, respectively, such that 
$\mathbf z$ has 3 zero coordinates, and $\mathbf z'$ has 3 zero coordinates and they are orthogonal.

Then 
\[
h^{3,1}=|G-\mbox{invariant deformations}|+|R^{2,4}|;\]
\[
h^{1,1}=|G^T-\mbox{invariant deformations}|+|R^{4,2}|;\]
\[
h^{2,1}=|R^{3,3}|.\]

For threefolds, we found (see \cite{DB}) that neither log-convexity nor log-concavity hold for Euler numbers. 

For example,  for  Fermat of type (2,3,8,32,96), Number 20 in \cite{DB}, we have triples of Euler numbers 
\[ (408, 168, 48) \mbox{ with } 168^2> 408\times 48.
\]
and 
\[ (408, 144, 48) \mbox{ with } 144^2> 408\times 48.
\]

 For  Fermat of type (2,3,8,36,72), Number 22 in \cite{DB}, we have triple of Euler numbers 
\[ (456, 192, 96) \mbox{ with } 192^2< 456\times 96.
\]

 


\section*{Appendix}
Here we give the list of deformations and the Roan pairs for orbifolds of K3 surfaces of the Fermat type
\subsection{(4,4,4,4)}
$A = (4, 4, 4, 4)$ in $ \mathbb P (1, 1, 1, 1)$.
\begin{enumerate}
    \item 
$|G| = 64$, generators  [(0, 0, 1/4, 3/4), (0, 1/4, 0, 3/4), (1/4, 1/4, 1/4, 1/4)],
the generators of the mirror group $G^T=G_0=Z_4$,  [ (1/4, 1/4, 1/4, 1/4)]

There is only one $G$-invariant deformation
(1, 1, 1, 1), and 19 $G^T$-invariant deformation, 
(1, 1, 1, 1) 
(0, 0, 2, 2) (0, 1, 1, 2) (0, 1, 2, 1) (0, 2, 0, 2) 
(0, 2, 1, 1) (0, 2, 2, 0) (1, 0, 1, 2) (1, 0, 2, 1) 
(1, 1, 0, 2) (1, 1, 2, 0) (1, 2, 0, 1) (1, 2, 1, 0) 
(2, 0, 0, 2) (2, 0, 1, 1) (2, 0, 2, 0) (2, 1, 0, 1) 
(2, 1, 1, 0) (2, 2, 0, 0). 

There are no Roan's pairs.

$\chi_{Vafa}(W_0/G) = 24$, $h^{1,1}=1+19=20$. 

\item 
$|G| = 16$ with generators  [(0, 1/2, 1/4, 1/4), (1/4, 1/4, 1/4, 1/4)].
The mirror groups has generators [(0, 0, 1/4, 3/4), (1/4, 1/4, 1/4, 1/4)].

There are 5 $G$-deformationa 
(1, 1, 1, 1) 
(0, 0, 2, 2) (1, 1, 0, 2) (1, 1, 2, 0) (2, 2, 0, 0);

and 5 $G^T$ deformations 
(1, 1, 1, 1) 
(0, 0, 2, 2) (0, 2, 1, 1) (2, 0, 1, 1) (2, 2, 0, 0).

There are 10 $(G,G^T)$-Roan's pairs
[(0, 0, 1/2, 1/2), (1/2, 1/2, 0, 0)] 
[(1/4, 3/4, 0, 0), (0, 0, 1/4, 3/4)] [(1/4, 3/4, 0, 0), (0, 0, 1/2, 1/2)] [(1/4, 3/4, 0, 0), (0, 0, 3/4, 1/4)] [(1/2, 1/2, 0, 0), (0, 0, 1/4, 3/4)] 
[(1/2, 1/2, 0, 0), (0, 0, 1/2, 1/2)] [(1/2, 1/2, 0, 0), (0, 0, 3/4, 1/4)] [(3/4, 1/4, 0, 0), (0, 0, 1/4, 3/4)] [(3/4, 1/4, 0, 0), (0, 0, 1/2, 1/2)] 
[(3/4, 1/4, 0, 0), (0, 0, 3/4, 1/4)].

And 10 $(G^,G)$-Roan's pairs
[(0, 0, 1/4, 3/4), (1/4, 3/4, 0, 0)] 
[(0, 0, 1/4, 3/4), (1/2, 1/2, 0, 0)] [(0, 0, 1/4, 3/4), (3/4, 1/4, 0, 0)] [(0, 0, 1/2, 1/2), (1/4, 3/4, 0, 0)] [(0, 0, 1/2, 1/2), (1/2, 1/2, 0, 0)] 
[(0, 0, 1/2, 1/2), (3/4, 1/4, 0, 0)] [(0, 0, 3/4, 1/4), (1/4, 3/4, 0, 0)] [(0, 0, 3/4, 1/4), (1/2, 1/2, 0, 0)] [(0, 0, 3/4, 1/4), (3/4, 1/4, 0, 0)] 
[(1/2, 1/2, 0, 0), (0, 0, 1/2, 1/2)].

$\chi_{Vafa}(W_0/G) = 24$, $h^{1,1}=5+5+10=20$.


\item 
$|G| = 32$ with generators  [(0, 0, 1/4, 3/4), (0, 1/2, 0, 1/2), (1/4, 1/4, 1/4, 1/4)], and $G^T$ with generators [(0, 0, 1/2, 1/2), (1/4, 1/4, 1/4, 1/4)]

Then, there are 3 $G$-invariant deformations 
(1, 1, 1, 1) 
(0, 0, 2, 2) (2, 2, 0, 0);

and there are 11 $G^T$-invariant deformations 
(1, 1, 1, 1) 
(0, 0, 2, 2) (0, 2, 0, 2) (0, 2, 1, 1) (0, 2, 2, 0) 
(1, 1, 0, 2) (1, 1, 2, 0) (2, 0, 0, 2) (2, 0, 1, 1) 
(2, 0, 2, 0) (2, 2, 0, 0) 

There are $6$ $(G,G^T)$-pairs, 
[(0, 0, 1/4, 3/4), (1/2, 1/2, 0, 0)] 
[(0, 0, 1/2, 1/2), (1/2, 1/2, 0, 0)] [(0, 0, 3/4, 1/4), (1/2, 1/2, 0, 0)] [(1/4, 3/4, 0, 0), (0, 0, 1/2, 1/2)] [(1/2, 1/2, 0, 0), (0, 0, 1/2, 1/2)] 
[(3/4, 1/4, 0, 0), (0, 0, 1/2, 1/2)];

and $6$ $(G^,G)$-pairs,
[(0, 0, 1/2, 1/2), (1/4, 3/4, 0, 0)] 
[(0, 0, 1/2, 1/2), (1/2, 1/2, 0, 0)] [(0, 0, 1/2, 1/2), (3/4, 1/4, 0, 0)] [(1/2, 1/2, 0, 0), (0, 0, 1/4, 3/4)] [(1/2, 1/2, 0, 0), (0, 0, 1/2, 1/2)] 
[(1/2, 1/2, 0, 0), (0, 0, 3/4, 1/4)] 

$\chi_{Vafa}(W_0/G) = 24$, $h^{1,1}=3+11+6=20$.

\end{enumerate}

\subsection{(3,4,4,6)}
$A = (3, 4, 4, 6)$ in $\mathbb P (4, 3, 3, 2)$.
\begin{enumerate}
    \item 
$|G| = 24$ with generators [(0, 0, 1/2, 1/2), (1/3, 1/4, 1/4, 1/6)] and $G^T=G_0=Z_{12}$ with generator (1/3, 1/4, 1/4, 1/6).

There are $4$ $G$-deformations, 
(1, 1, 1, 1) 
(0, 1, 1, 3) (0, 2, 2, 0) (1, 0, 0, 4);

and there are $8$ $G^T$-deformations 
(1, 1, 1, 1) 
(0, 0, 2, 3) (0, 1, 1, 3) (0, 2, 0, 3) (0, 2, 2, 0) 
(1, 0, 0, 4) (1, 0, 2, 1) (1, 2, 0, 1). 

There are $8$ $(G,G^T)$-pairs, 
[(0, 1/4, 3/4, 0), (1/3, 0, 0, 2/3)] 
[(0, 1/4, 3/4, 0), (2/3, 0, 0, 1/3)] [(0, 1/2, 1/2, 0), (1/3, 0, 0, 2/3)] [(0, 1/2, 1/2, 0), (2/3, 0, 0, 1/3)] [(0, 3/4, 1/4, 0), (1/3, 0, 0, 2/3)] 
[(0, 3/4, 1/4, 0), (2/3, 0, 0, 1/3)] [(1/3, 0, 0, 2/3), (0, 1/2, 1/2, 0)] [(2/3, 0, 0, 1/3), (0, 1/2, 1/2, 0)];

and there are $8$ $(G^,G)$-pairs, 
[(0, 1/2, 1/2, 0), (1/3, 0, 0, 2/3)] 
[(0, 1/2, 1/2, 0), (2/3, 0, 0, 1/3)] [(1/3, 0, 0, 2/3), (0, 1/4, 3/4, 0)] [(1/3, 0, 0, 2/3), (0, 1/2, 1/2, 0)] [(1/3, 0, 0, 2/3), (0, 3/4, 1/4, 0)] 
[(2/3, 0, 0, 1/3), (0, 1/4, 3/4, 0)] [(2/3, 0, 0, 1/3), (0, 1/2, 1/2, 0)] [(2/3, 0, 0, 1/3), (0, 3/4, 1/4, 0)] 

$\chi_{Vafa}(W_0/G) = 24$, $h^{1,1}=4+8+8=20$.

\end{enumerate}
\subsection{(3,3,6,6)}
$A = (3, 3, 6, 6)$ in $\mathbb P (2, 2, 1, 1)$.

\begin{enumerate}
    \item 
$|G^{\max}| = 54$ with generators  [(0, 0, 1/6, 5/6), (0, 1/3, 0, 2/3), (1/3, 1/3, 1/6, 1/6)],
$G^T=G_0=Z_6$ with generator  (1/3, 1/3, 1/6, 1/6).

There are 2 $G$-deformations 
(1, 1, 1, 1), 
(0, 0, 3, 3);

and $16$ $G^T$-deformations
(1, 1, 1, 1) 
(0, 0, 2, 4) (0, 0, 3, 3) (0, 0, 4, 2) (0, 1, 0, 4) 
(0, 1, 1, 3) (0, 1, 2, 2) (0, 1, 3, 1) (0, 1, 4, 0) 
(1, 0, 0, 4) (1, 0, 1, 3) (1, 0, 2, 2) (1, 0, 3, 1) 
(1, 0, 4, 0) (1, 1, 0, 2) (1, 1, 2, 0);

there are two $(G,G^T)$-pairs, 
[(1/3, 2/3, 0, 0), (0, 0, 1/2, 1/2)] 
[(2/3, 1/3, 0, 0), (0, 0, 1/2, 1/2)];

and there are two $(G^T,G)$-pairs, 
[(0, 0, 1/2, 1/2), (1/3, 2/3, 0, 0)] 
[(0, 0, 1/2, 1/2), (2/3, 1/3, 0, 0)].

$\chi_{Vafa}(W_0/G) = 24$, $h^{1,1}=2+16+2=20$.


\item 
$|G| = 18$ with generators  [(0, 1/3, 1/3, 1/3), (1/3, 1/3, 1/6, 1/6)], $G^T$ has generators [(0, 0, 1/6, 5/6), (1/3, 1/3, 1/6, 1/6)]

There are $6$ $G$-deformations 
(1, 1, 1, 1) 
(0, 0, 2, 4) (0, 0, 3, 3) (0, 0, 4, 2) (1, 1, 0, 2) 
(1, 1, 2, 0);

and $4$ $G^T$-deformations 
(1, 1, 1, 1) 
(0, 0, 3, 3) (0, 1, 2, 2) (1, 0, 2, 2) 

There are 10 $(G,G^T)$-pairs, 

[(1/3, 2/3, 0, 0), (0, 0, 1/6, 5/6)] 
[(1/3, 2/3, 0, 0), (0, 0, 1/3, 2/3)] [(1/3, 2/3, 0, 0), (0, 0, 1/2, 1/2)] [(1/3, 2/3, 0, 0), (0, 0, 2/3, 1/3)] [(1/3, 2/3, 0, 0), (0, 0, 5/6, 1/6)] 
[(2/3, 1/3, 0, 0), (0, 0, 1/6, 5/6)] [(2/3, 1/3, 0, 0), (0, 0, 1/3, 2/3)] [(2/3, 1/3, 0, 0), (0, 0, 1/2, 1/2)] [(2/3, 1/3, 0, 0), (0, 0, 2/3, 1/3)] 
[(2/3, 1/3, 0, 0), (0, 0, 5/6, 1/6)];

and there are ten $(G^T,G)$-pairs, 
[(0, 0, 1/6, 5/6), (1/3, 2/3, 0, 0)] 
[(0, 0, 1/6, 5/6), (2/3, 1/3, 0, 0)] [(0, 0, 1/3, 2/3), (1/3, 2/3, 0, 0)] [(0, 0, 1/3, 2/3), (2/3, 1/3, 0, 0)] [(0, 0, 1/2, 1/2), (1/3, 2/3, 0, 0)] 
[(0, 0, 1/2, 1/2), (2/3, 1/3, 0, 0)] [(0, 0, 2/3, 1/3), (1/3, 2/3, 0, 0)] [(0, 0, 2/3, 1/3), (2/3, 1/3, 0, 0)] [(0, 0, 5/6, 1/6), (1/3, 2/3, 0, 0)] 
[(0, 0, 5/6, 1/6), (2/3, 1/3, 0, 0)].

$\chi_{Vafa}(W_0/G) = 24$, $h^{1,1}=6+4+10=20$.

\item 
$|G| = 18$ with generators [(0, 1/3, 0, 2/3), (1/3, 1/3, 1/6, 1/6)] and $G^T$ has generators
 [(0, 1/3, 0, 2/3), (1/3, 1/3, 1/6, 1/6)].
 
There are $6$ $G$-deformations 
(1, 1, 1, 1) 
(0, 0, 3, 3) (0, 1, 0, 4) (0, 1, 3, 1) (1, 0, 1, 3) 
(1, 0, 4, 0).

There are $6$ $G^T$-deformations 
(1, 1, 1, 1) 
(0, 0, 3, 3) (0, 1, 0, 4) (0, 1, 3, 1) (1, 0, 1, 3) 
(1, 0, 4, 0). 

There are eight  $(G,G^T)$-pairs,
[(0, 1/3, 0, 2/3), (1/3, 0, 2/3, 0)] 
[(0, 1/3, 0, 2/3), (2/3, 0, 1/3, 0)] [(0, 2/3, 0, 1/3), (1/3, 0, 2/3, 0)] [(0, 2/3, 0, 1/3), (2/3, 0, 1/3, 0)] [(1/3, 0, 2/3, 0), (0, 1/3, 0, 2/3)] 
[(1/3, 0, 2/3, 0), (0, 2/3, 0, 1/3)] [(2/3, 0, 1/3, 0), (0, 1/3, 0, 2/3)] [(2/3, 0, 1/3, 0), (0, 2/3, 0, 1/3)];

and there are 8 $(G^T,G)$-pairs,

[(0, 1/3, 0, 2/3), (1/3, 0, 2/3, 0)] 
[(0, 1/3, 0, 2/3), (2/3, 0, 1/3, 0)] [(0, 2/3, 0, 1/3), (1/3, 0, 2/3, 0)] [(0, 2/3, 0, 1/3), (2/3, 0, 1/3, 0)] [(1/3, 0, 2/3, 0), (0, 1/3, 0, 2/3)] 
[(1/3, 0, 2/3, 0), (0, 2/3, 0, 1/3)] [(2/3, 0, 1/3, 0), (0, 1/3, 0, 2/3)] [(2/3, 0, 1/3, 0), (0, 2/3, 0, 1/3)] 

$\chi_{Vafa}(W_0/G) = 24$, $h^{1,1}=6+6+8=20$.

\end{enumerate}

\subsection{(3, 3, 4, 12)}
$A = (3, 3, 4, 12)$ in  $\mathbb P  (4, 4, 3, 1)$.

\begin{enumerate}
\item 
$G^{\max}$, $|G| = 36$, generators  [(0, 1/3, 0, 2/3), (1/3, 1/3, 1/4, 1/12)];
$G^T=G_0$ with the generators  (1/3, 1/3, 1/4, 1/12).

There are $4$ $G$-deformations 
(1, 1, 1, 1) 
(0, 0, 1, 9) (0, 0, 2, 6) (1, 1, 0, 4);

there are $10$ $G^T$-deformations,
(1, 1, 1, 1) 
(0, 0, 1, 9) (0, 0, 2, 6) (0, 1, 0, 8) (0, 1, 1, 5) 
(0, 1, 2, 2) (1, 0, 0, 8) (1, 0, 1, 5) (1, 0, 2, 2) 
(1, 1, 0, 4). 

There are six  $(G,G^T)$-pairs,
[(1/3, 2/3, 0, 0), (0, 0, 1/4, 3/4)] 
[(1/3, 2/3, 0, 0), (0, 0, 1/2, 1/2)] [(1/3, 2/3, 0, 0), (0, 0, 3/4, 1/4)] [(2/3, 1/3, 0, 0), (0, 0, 1/4, 3/4)] [(2/3, 1/3, 0, 0), (0, 0, 1/2, 1/2)] 
[(2/3, 1/3, 0, 0), (0, 0, 3/4, 1/4)];

and there are $6$  $(G^T,G)$-pairs,
[(0, 0, 1/4, 3/4), (1/3, 2/3, 0, 0)] 
[(0, 0, 1/4, 3/4), (2/3, 1/3, 0, 0)] [(0, 0, 1/2, 1/2), (1/3, 2/3, 0, 0)] [(0, 0, 1/2, 1/2), (2/3, 1/3, 0, 0)] [(0, 0, 3/4, 1/4), (1/3, 2/3, 0, 0)] 
[(0, 0, 3/4, 1/4), (2/3, 1/3, 0, 0)].

$\chi_{Vafa}(W_0/G) = 24$, $h^{1,1}=4+10+6=20$.

\end{enumerate}

\subsection{(2, 6, 6, 6)}

$A = (2, 6, 6, 6)$ in $\mathbb P  (3, 1, 1, 1)$.

\begin{enumerate}
    \item 
    $G=G^{\max}$, $|G| = 72$, generators  [(0, 0, 1/6, 5/6), (0, 1/6, 0, 5/6), (1/2, 1/6, 1/6, 1/6)];

$G^T=G_0$, $G_0=6$, generator  (1/2, 1/6, 1/6, 1/6).

There are $1$ $G$-deformations,
(0, 2, 2, 2). 

There are $19$ $G^T$-deformations, 
(0, 0, 2, 4) (0, 0, 3, 3) (0, 0, 4, 2) (0, 1, 1, 4) 
(0, 1, 2, 3) (0, 1, 3, 2) (0, 1, 4, 1) (0, 2, 0, 4) 
(0, 2, 1, 3) (0, 2, 2, 2) (0, 2, 3, 1) (0, 2, 4, 0) 
(0, 3, 0, 3) (0, 3, 1, 2) (0, 3, 2, 1) (0, 3, 3, 0) 
(0, 4, 0, 2) (0, 4, 1, 1) (0, 4, 2, 0).

There are no Roan's pairs. 

$\chi_{Vafa}(W_0/G) = 24$, $h^{1,1}=1+19=20$.

\item 
$|G| = 12$, its generators are [(0, 0, 1/2, 1/2), (1/2, 1/6, 1/6, 1/6)];
$G^T$, $|G^T|=36$,  has generators  [(0, 0, 1/6, 5/6), (1/2, 1/6, 1/6, 1/6)].

There are $11$ $G$-deformations,
(0, 0, 2, 4) (0, 0, 3, 3) (0, 0, 4, 2) (0, 2, 0, 4) 
(0, 2, 1, 3) (0, 2, 2, 2) (0, 2, 3, 1) (0, 2, 4, 0) 
(0, 4, 0, 2) (0, 4, 1, 1) (0, 4, 2, 0).

There are $3$ $G^T$-deformations,
(0, 0, 3, 3) (0, 2, 2, 2) (0, 4, 1, 1). 

There are six  $(G,G^T)$-pairs,
[(0, 0, 1/2, 1/2), (1/2, 1/2, 0, 0)] 
[(1/2, 1/2, 0, 0), (0, 0, 1/6, 5/6)] [(1/2, 1/2, 0, 0), (0, 0, 1/3, 2/3)] [(1/2, 1/2, 0, 0), (0, 0, 1/2, 1/2)] [(1/2, 1/2, 0, 0), (0, 0, 2/3, 1/3)] 
[(1/2, 1/2, 0, 0), (0, 0, 5/6, 1/6)];

and there are six  $(G^T,G)$-pairs,
[(0, 0, 1/6, 5/6), (1/2, 1/2, 0, 0)] 
[(0, 0, 1/3, 2/3), (1/2, 1/2, 0, 0)] [(0, 0, 1/2, 1/2), (1/2, 1/2, 0, 0)] [(0, 0, 2/3, 1/3), (1/2, 1/2, 0, 0)] [(0, 0, 5/6, 1/6), (1/2, 1/2, 0, 0)] 
[(1/2, 1/2, 0, 0), (0, 0, 1/2, 1/2)] 

$\chi_{Vafa}(W_0/G) = 24$, $h^{1,1}=11+3+6=20$.
\item 
$|G| = 24$, generators  [(0, 0, 1/2, 1/2), (0, 1/6, 1/6, 2/3), (1/2, 1/6, 1/6, 1/6)],
$|G^T| = 18$, generators [(0, 0, 1/3, 2/3), (1/2, 1/6, 1/6, 1/6)].

There are $7$ $G$-deformations,
(0, 0, 2, 4) (0, 0, 4, 2) (0, 2, 0, 4) (0, 2, 2, 2) 
(0, 2, 4, 0) (0, 4, 0, 2) (0, 4, 2, 0).

There are $7$ $G^T$-deformations,
(0, 0, 3, 3) (0, 1, 1, 4) (0, 1, 4, 1) (0, 2, 2, 2) 
(0, 3, 0, 3) (0, 3, 3, 0) (0, 4, 1, 1).

There are six  $(G,G^T)$-pairs,
[(1/2, 0, 0, 1/2), (0, 1/3, 2/3, 0)] 
[(1/2, 0, 0, 1/2), (0, 2/3, 1/3, 0)] [(1/2, 0, 1/2, 0), (0, 1/3, 0, 2/3)] [(1/2, 0, 1/2, 0), (0, 2/3, 0, 1/3)] [(1/2, 1/2, 0, 0), (0, 0, 1/3, 2/3)] 
[(1/2, 1/2, 0, 0), (0, 0, 2/3, 1/3)].

There are six  $(G^T,G)$-pairs,
[(0, 0, 1/3, 2/3), (1/2, 1/2, 0, 0)] 
[(0, 0, 2/3, 1/3), (1/2, 1/2, 0, 0)] [(0, 1/3, 0, 2/3), (1/2, 0, 1/2, 0)] [(0, 1/3, 2/3, 0), (1/2, 0, 0, 1/2)] [(0, 2/3, 0, 1/3), (1/2, 0, 1/2, 0)] 
[(0, 2/3, 1/3, 0), (1/2, 0, 0, 1/2)] 

$\chi_{Vafa}(W_0/G) = 24$, $h^{1,1}=7+7+6=20$.

\end{enumerate}
\subsection{(2,5,5,10)}
$A = (2, 5, 5, 10)$ in $\mathbb P (5, 2, 2, 1)$. 
    $G=G^{\max}$, 
    $|G| = 50$, generators,  [(0, 0, 1/5, 4/5), (1/2, 1/5, 1/5, 1/10)];

    $G^T=G_0=Z_{10}$, generators  (1/2, 1/5, 1/5, 1/10).

 There are $2$ $G^T$-deformations,  
(0, 1, 1, 6) (0, 2, 2, 2).

There are $14$ $G^T$-deformations,
(0, 0, 1, 8) (0, 0, 2, 6) (0, 0, 3, 4) (0, 1, 0, 8) 
(0, 1, 1, 6) (0, 1, 2, 4) (0, 1, 3, 2) (0, 2, 0, 6) 
(0, 2, 1, 4) (0, 2, 2, 2) (0, 2, 3, 0) (0, 3, 0, 4)
(0, 3, 1, 2) (0, 3, 2, 0).

There are four  $(G,G^T)$-pairs,
[(0, 1/5, 4/5, 0), (1/2, 0, 0, 1/2)] 
[(0, 2/5, 3/5, 0), (1/2, 0, 0, 1/2)] [(0, 3/5, 2/5, 0), (1/2, 0, 0, 1/2)] [(0, 4/5, 1/5, 0), (1/2, 0, 0, 1/2)]. 

There are $4$ $(G^T,G)$-pairs,
[(1/2, 0, 0, 1/2), (0, 1/5, 4/5, 0)] 
[(1/2, 0, 0, 1/2), (0, 2/5, 3/5, 0)] [(1/2, 0, 0, 1/2), (0, 3/5, 2/5, 0)] [(1/2, 0, 0, 1/2), (0, 4/5, 1/5, 0)] 

$\chi_{Vafa}(W_0/G) = 24$, $h^{1,1}=2+14+4=20$.

\subsection{(2, 4, 8, 8)}
$A = (2, 4, 8, 8)$, $\mathbb{P} (4, 2, 1, 1)$.

\begin{enumerate}
    \item 
    $G=G^{\max}$, $|G| = 64$, generators [(0, 0, 1/8, 7/8), (0, 1/4, 0, 3/4), (1/2, 1/4, 1/8, 1/8)];

    $G^T=G_0=Z_8$, generators (1/2, 1/4, 1/8, 1/8).

 There are $2$ $G$-deformations, 
(0, 0, 4, 4) (0, 2, 2, 2).

 There are $17$ $G^T$-deformations, 
(0, 0, 2, 6) (0, 0, 3, 5) (0, 0, 4, 4) (0, 0, 5, 3) 
(0, 0, 6, 2) (0, 1, 0, 6) (0, 1, 1, 5) (0, 1, 2, 4) 
(0, 1, 3, 3) (0, 1, 4, 2) (0, 1, 5, 1) (0, 1, 6, 0) 
(0, 2, 0, 4) (0, 2, 1, 3) (0, 2, 2, 2) (0, 2, 3, 1) 
(0, 2, 4, 0) 

There is one  $(G,G^T)$-pair
[(1/2, 1/2, 0, 0), (0, 0, 1/2, 1/2)]  and one $(G^T,G)$-pair
[(0, 0, 1/2, 1/2), (1/2, 1/2, 0, 0)].

$\chi_{Vafa}(W_0/G) = 24$, $h^{1,1}=2+17+1=20$.

\item 
$|G| = 32$, generators  [(0, 0, 1/4, 3/4), (0, 1/4, 1/8, 5/8), (1/2, 1/4, 1/8, 1/8)];

$|G^T|=16$, generators  [(0, 0, 1/4, 3/4), (1/2, 1/4, 1/8, 1/8)].

 There are $6$ $G$-deformations, 
(0, 0, 2, 6) (0, 0, 4, 4) (0, 0, 6, 2) (0, 2, 0, 4) 
(0, 2, 2, 2) (0, 2, 4, 0).

 There are $9$ $G^T$-deformations, 
(0, 0, 2, 6) (0, 0, 4, 4) (0, 0, 6, 2) (0, 1, 1, 5) 
(0, 1, 3, 3) (0, 1, 5, 1) (0, 2, 0, 4) (0, 2, 2, 2) 
(0, 2, 4, 0).

There are five  $(G,G^T)$-pairs,
[(1/2, 0, 0, 1/2), (0, 1/2, 1/2, 0)] 
[(1/2, 0, 1/2, 0), (0, 1/2, 0, 1/2)] [(1/2, 1/2, 0, 0), (0, 0, 1/4, 3/4)] [(1/2, 1/2, 0, 0), (0, 0, 1/2, 1/2)] [(1/2, 1/2, 0, 0), (0, 0, 3/4, 1/4)]. 

There are 5  $(G^T,G)$-pairs, 
[(0, 0, 1/4, 3/4), (1/2, 1/2, 0, 0)] 
[(0, 0, 1/2, 1/2), (1/2, 1/2, 0, 0)] [(0, 0, 3/4, 1/4), (1/2, 1/2, 0, 0)] [(0, 1/2, 0, 1/2), (1/2, 0, 1/2, 0)] [(0, 1/2, 1/2, 0), (1/2, 0, 0, 1/2)] 

$\chi_{Vafa}(W_0/G) = 24$, $h^{1,1}=6+9+5=20$.

\item 
$
|G| = 16$, generators  [(0, 1/4, 3/8, 3/8), (1/2, 1/4, 1/8, 1/8)], $
|G^T| = 32$, generators [(0, 0, 1/8, 7/8), (1/2, 1/4, 1/8, 1/8)].

 There are $10$ $G$-deformations,
(0, 0, 2, 6) (0, 0, 3, 5) (0, 0, 4, 4) (0, 0, 5, 3) 
(0, 0, 6, 2) (0, 2, 0, 4) (0, 2, 1, 3) (0, 2, 2, 2) 
(0, 2, 3, 1) (0, 2, 4, 0).

 There are $3$ $G^T$-deformations,
(0, 0, 4, 4) (0, 1, 3, 3) (0, 2, 2, 2). 

There are seven  $(G,G^T)$-pairs,
[(1/2, 1/2, 0, 0), (0, 0, 1/8, 7/8)] 
[(1/2, 1/2, 0, 0), (0, 0, 1/4, 3/4)] [(1/2, 1/2, 0, 0), (0, 0, 3/8, 5/8)] [(1/2, 1/2, 0, 0), (0, 0, 1/2, 1/2)] [(1/2, 1/2, 0, 0), (0, 0, 5/8, 3/8)] 
[(1/2, 1/2, 0, 0), (0, 0, 3/4, 1/4)] [(1/2, 1/2, 0, 0), (0, 0, 7/8, 1/8)].

There are 7  $(G^T,G)$-pairs, 
[(0, 0, 1/8, 7/8), (1/2, 1/2, 0, 0)] 
[(0, 0, 1/4, 3/4), (1/2, 1/2, 0, 0)] [(0, 0, 3/8, 5/8), (1/2, 1/2, 0, 0)] [(0, 0, 1/2, 1/2), (1/2, 1/2, 0, 0)] [(0, 0, 5/8, 3/8), (1/2, 1/2, 0, 0)] 
[(0, 0, 3/4, 1/4), (1/2, 1/2, 0, 0)] [(0, 0, 7/8, 1/8), (1/2, 1/2, 0, 0)] 

$\chi_{Vafa}(W_0/G) = 24$, $h^{1,1}=10+3+7=20$.
\item 
$
|G| = 16$, generators [(0, 1/4, 1/8, 5/8), (1/2, 1/4, 1/8, 1/8)];

$|G^T| = 32$, generators [(0, 1/4, 0, 3/4), (1/2, 1/4, 1/8, 1/8)].

 There are $10$ $G$-deformations,
(0, 0, 2, 6) (0, 0, 4, 4) (0, 0, 6, 2) (0, 1, 0, 6) 
(0, 1, 2, 4) (0, 1, 4, 2) (0, 1, 6, 0) (0, 2, 0, 4) 
(0, 2, 2, 2) (0, 2, 4, 0). 

 There are $4$ $G^T$-deformations,
(0, 0, 4, 4) (0, 1, 1, 5) (0, 1, 5, 1) (0, 2, 2, 2).

There are six  $(G,G^T)$-pairs,

[(1/2, 0, 0, 1/2), (0, 1/4, 3/4, 0)] 
[(1/2, 0, 0, 1/2), (0, 1/2, 1/2, 0)] [(1/2, 0, 0, 1/2), (0, 3/4, 1/4, 0)] [(1/2, 0, 1/2, 0), (0, 1/4, 0, 3/4)] [(1/2, 0, 1/2, 0), (0, 1/2, 0, 1/2)] 
[(1/2, 0, 1/2, 0), (0, 3/4, 0, 1/4)].

There are 6  $(G^T,G)$-pairs,
[(0, 1/4, 0, 3/4), (1/2, 0, 1/2, 0)] 
[(0, 1/4, 3/4, 0), (1/2, 0, 0, 1/2)] [(0, 1/2, 0, 1/2), (1/2, 0, 1/2, 0)] [(0, 1/2, 1/2, 0), (1/2, 0, 0, 1/2)] [(0, 3/4, 0, 1/4), (1/2, 0, 1/2, 0)] 
[(0, 3/4, 1/4, 0), (1/2, 0, 0, 1/2)] 

$\chi_{Vafa}(W_0/G) = 24$, $h^{1,1}=10+4+6=20$.

\end{enumerate}

\subsection{(2, 4, 6, 12)}

$A = (2, 4, 6, 12)$, $\mathbb P = (6, 3, 2, 1)$.
\begin{enumerate}
    \item $G=G^{\max}$, $|G| = 48$, generators [(0, 0, 1/6, 5/6), (0, 1/4, 0, 3/4), (1/2, 1/4, 1/6, 1/12)];

$G^T=G_0=G_{12}$, generator  (1/2, 1/4, 1/6, 1/12).

 There are $4$ $G$-deformations,
(0, 0, 2, 8) (0, 0, 4, 4) (0, 2, 0, 6) (0, 2, 2, 2).

 There are $13$ $G^T$-deformations,
(0, 0, 1, 10) (0, 0, 2, 8) (0, 0, 3, 6) (0, 0, 4, 4) 
(0, 1, 0, 9) (0, 1, 1, 7) (0, 1, 2, 5) (0, 1, 3, 3) 
(0, 1, 4, 1) (0, 2, 0, 6) (0, 2, 1, 4) (0, 2, 2, 2) 
(0, 2, 3, 0). 

There are three  $(G,G^T)$-pairs,
[(1/2, 0, 1/2, 0), (0, 1/2, 0, 1/2)] 
[(1/2, 1/2, 0, 0), (0, 0, 1/3, 2/3)] [(1/2, 1/2, 0, 0), (0, 0, 2/3, 1/3)];

and there are 3  $(G^T,G)$-pairs,
[(0, 0, 1/3, 2/3), (1/2, 1/2, 0, 0)] 
[(0, 0, 2/3, 1/3), (1/2, 1/2, 0, 0)] [(0, 1/2, 0, 1/2), (1/2, 0, 1/2, 0)] .

$\chi_{Vafa}(W_0/G) = 24$, $h^{1,1}=4 +13+3=20$.

\item 
$|G| = 24$ with generators  [(0, 1/4, 1/6, 7/12), (1/2, 1/4, 1/6, 1/12)], then
$G^T$ has generators [(0, 0, 1/6, 5/6), (1/2, 1/4, 1/6, 1/12)]. 

There are $8$ $G$-deformations,
(0, 0, 1, 10) (0, 0, 2, 8) (0, 0, 3, 6) (0, 0, 4, 4) 
(0, 2, 0, 6) (0, 2, 1, 4) (0, 2, 2, 2) (0, 2, 3, 0);

There are $6$ $G^T$-deformations,
(0, 0, 2, 8) (0, 0, 4, 4) (0, 1, 1, 7) (0, 1, 3, 3) 
(0, 2, 0, 6) (0, 2, 2, 2) 

There are six  $(G,G^T)$-pairs,
[(1/2, 0, 0, 1/2), (0, 1/2, 1/2, 0)] 
[(1/2, 1/2, 0, 0), (0, 0, 1/6, 5/6)] [(1/2, 1/2, 0, 0), (0, 0, 1/3, 2/3)] [(1/2, 1/2, 0, 0), (0, 0, 1/2, 1/2)] [(1/2, 1/2, 0, 0), (0, 0, 2/3, 1/3)] 
[(1/2, 1/2, 0, 0), (0, 0, 5/6, 1/6)]. 

There are 6  $(G^T,G)$-pairs,
[(0, 0, 1/6, 5/6), (1/2, 1/2, 0, 0)] 
[(0, 0, 1/3, 2/3), (1/2, 1/2, 0, 0)] [(0, 0, 1/2, 1/2), (1/2, 1/2, 0, 0)] [(0, 0, 2/3, 1/3), (1/2, 1/2, 0, 0)] [(0, 0, 5/6, 1/6), (1/2, 1/2, 0, 0)] 
[(0, 1/2, 1/2, 0), (1/2, 0, 0, 1/2)] 

$\chi_{Vafa}(W_0/G) = 24$, $h^{1,1}=8 +6+6=20$.

\item 
$|G| = 24$ with generators [(0, 1/4, 0, 3/4), (1/2, 1/4, 1/6, 1/12)], then
$G^T$ has generators [(0, 1/4, 0, 3/4), (1/2, 1/4, 1/6, 1/12)].

There are $7$ $G$-deformations,
(0, 0, 2, 8) (0, 0, 4, 4) (0, 1, 0, 9) (0, 1, 2, 5) 
(0, 1, 4, 1) (0, 2, 0, 6) (0, 2, 2, 2);

there are $7$ $G^T$-deformations,
(0, 0, 2, 8) (0, 0, 4, 4) (0, 1, 0, 9) (0, 1, 2, 5) 
(0, 1, 4, 1) (0, 2, 0, 6) (0, 2, 2, 2). 

There are six  $(G,G^T)$-pairs,
[(0, 1/4, 0, 3/4), (1/2, 0, 1/2, 0)] 
[(0, 1/2, 0, 1/2), (1/2, 0, 1/2, 0)] [(0, 3/4, 0, 1/4), (1/2, 0, 1/2, 0)] [(1/2, 0, 1/2, 0), (0, 1/4, 0, 3/4)] [(1/2, 0, 1/2, 0), (0, 1/2, 0, 1/2)] 
[(1/2, 0, 1/2, 0), (0, 3/4, 0, 1/4)];

There are six  $(G^T,G)$-pairs,
[(0, 1/4, 0, 3/4), (1/2, 0, 1/2, 0)] 
[(0, 1/2, 0, 1/2), (1/2, 0, 1/2, 0)] [(0, 3/4, 0, 1/4), (1/2, 0, 1/2, 0)] [(1/2, 0, 1/2, 0), (0, 1/4, 0, 3/4)] [(1/2, 0, 1/2, 0), (0, 1/2, 0, 1/2)] 
[(1/2, 0, 1/2, 0), (0, 3/4, 0, 1/4)] 

$\chi_{Vafa}(W_0/G) = 24$, $h^{1,1}=7 +7+6=20$.

\end{enumerate}

\subsection{(2, 4, 5, 20)}

$A = (2, 4, 5, 20)$, $\mathbb P  (10, 5, 4, 1)$.

\begin{enumerate}
    \item $G=G^{\max}$, 
$|G| = 40$, generators [(0, 1/4, 0, 3/4), (1/2, 1/4, 1/5, 1/20), 

$G^T=G_0=Z_{20}$, the generator  (1/2, 1/4, 1/5, 1/20).

There are $6$ $G$-deformations,
(0, 0, 1, 16) (0, 0, 2, 12) (0, 0, 3, 8) (0, 2, 0, 10) 
(0, 2, 1, 6) (0, 2, 2, 2). 

There are $10$ $G^T$-deformations,
(0, 0, 1, 16) (0, 0, 2, 12) (0, 0, 3, 8) (0, 1, 0, 15) 
(0, 1, 1, 11) (0, 1, 2, 7) (0, 1, 3, 3) (0, 2, 0, 10) 
(0, 2, 1, 6) (0, 2, 2, 2).

There are four  $(G,G^T)$-pairs, [(1/2, 1/2, 0, 0), (0, 0, 1/5, 4/5)] 
[(1/2, 1/2, 0, 0), (0, 0, 2/5, 3/5)] [(1/2, 1/2, 0, 0), (0, 0, 3/5, 2/5)] [(1/2, 1/2, 0, 0), (0, 0, 4/5, 1/5)]. 

There are $4$  $(G^T,G)$-pairs,
[(0, 0, 1/5, 4/5), (1/2, 1/2, 0, 0)] 
[(0, 0, 2/5, 3/5), (1/2, 1/2, 0, 0)] [(0, 0, 3/5, 2/5), (1/2, 1/2, 0, 0)] [(0, 0, 4/5, 1/5), (1/2, 1/2, 0, 0)] 

$\chi_{Vafa}(W_0/G) = 24$, $h^{1,1}=6 +10+4=20$.

\end{enumerate}

\subsection{(2, 3, 12, 12)}

$A = (2, 3, 12, 12)$, $\mathbb P (6, 4, 1, 1)$.

\begin{enumerate}
    \item $G=G^{\max}$, $
|G| = 72$, generators  [(0, 0, 1/12, 11/12), (1/2, 1/3, 1/12, 1/12)]; $G^T=G_0=Z_{12}$, the generator  (1/2, 1/3, 1/12, 1/12).

There are $2$ $G$-deformations,
(0, 0, 6, 6) (0, 1, 4, 4). 

There are $18$ $G^T$-deformations,
(0, 0, 2, 10) (0, 0, 3, 9) (0, 0, 4, 8) (0, 0, 5, 7) 
(0, 0, 6, 6) (0, 0, 7, 5) (0, 0, 8, 4) (0, 0, 9, 3) 
(0, 0, 10, 2) (0, 1, 0, 8) (0, 1, 1, 7) (0, 1, 2, 6) 
(0, 1, 3, 5) (0, 1, 4, 4) (0, 1, 5, 3) (0, 1, 6, 2) 
(0, 1, 7, 1) (0, 1, 8, 0). 

There are no Roan's pairs. 

$\chi_{Vafa}(W_0/G) = 24$, $h^{1,1}=2 +18+0=20$.

\item 
$|G| = 24$, the generators  [(0, 0, 1/4, 3/4), (1/2, 1/3, 1/12, 1/12)]; 
$G^T$ has generators  [(0, 0, 1/6, 5/6), (1/2, 1/3, 1/12, 1/12)]. 

There are $10$ $G$-deformations,
(0, 0, 2, 10) (0, 0, 4, 8) (0, 0, 6, 6) (0, 0, 8, 4) 
(0, 0, 10, 2) (0, 1, 0, 8) (0, 1, 2, 6) (0, 1, 4, 4) 
(0, 1, 6, 2) (0, 1, 8, 0). 

There are $6$ $G^T$-deformations, 
(0, 0, 3, 9) (0, 0, 6, 6) (0, 0, 9, 3) (0, 1, 1, 7) 
(0, 1, 4, 4) (0, 1, 7, 1).

There are four  $(G,G^T)$-pairs,
[(1/2, 0, 0, 1/2), (0, 1/3, 2/3, 0)] 
[(1/2, 0, 0, 1/2), (0, 2/3, 1/3, 0)] [(1/2, 0, 1/2, 0), (0, 1/3, 0, 2/3)] [(1/2, 0, 1/2, 0), (0, 2/3, 0, 1/3)].

There are four  $(G^T,G)$-pairs,
[(0, 1/3, 0, 2/3), (1/2, 0, 1/2, 0)] 
[(0, 1/3, 2/3, 0), (1/2, 0, 0, 1/2)] [(0, 2/3, 0, 1/3), (1/2, 0, 1/2, 0)] [(0, 2/3, 1/3, 0), (1/2, 0, 0, 1/2)].

$\chi_{Vafa}(W_0/G) = 24$, $h^{1,1}=10 +6+4=20$.

\end{enumerate}

\subsection{(2, 3, 10, 15)}
$A = (2, 3, 10, 15)$, $\mathbb P (15, 10, 3, 2)$.

$G=G^{\max}$, $
|G| = 30$, the generator  (1/2, 1/3, 1/10, 1/15), 
$G^T=G$. 

There are $8$ $G$-deformations,
(0, 0, 2, 12) (0, 0, 4, 9) (0, 0, 6, 6) (0, 0, 8, 3) 
(0, 1, 0, 10) (0, 1, 2, 7) (0, 1, 4, 4) (0, 1, 6, 1);

there are $8$ $G^T$-deformations,
(0, 0, 2, 12) (0, 0, 4, 9) (0, 0, 6, 6) (0, 0, 8, 3) 
(0, 1, 0, 10) (0, 1, 2, 7) (0, 1, 4, 4) (0, 1, 6, 1).

There are four  $(G,G^T)$-pairs,
[(0, 1/3, 0, 2/3), (1/2, 0, 1/2, 0)] 
[(0, 2/3, 0, 1/3), (1/2, 0, 1/2, 0)] [(1/2, 0, 1/2, 0), (0, 1/3, 0, 2/3)] [(1/2, 0, 1/2, 0), (0, 2/3, 0, 1/3)];

there are four  $(G^T,G)$-pairs,
[(0, 1/3, 0, 2/3), (1/2, 0, 1/2, 0)] 
[(0, 2/3, 0, 1/3), (1/2, 0, 1/2, 0)] [(1/2, 0, 1/2, 0), (0, 1/3, 0, 2/3)] [(1/2, 0, 1/2, 0), (0, 2/3, 0, 1/3)]. 
    
$\chi_{Vafa}(W_0/G) = 24$, $h^{1,1}=8 +8+4=20$.

\subsection{(2, 3, 9, 18)}

$A = (2, 3, 9, 18)$, $\mathbb P (9, 6, 2, 1)$.

$G=G^{\max}$, $
|G| = 54$, the generators  [(0, 0, 1/9, 8/9), (1/2, 1/3, 1/9, 1/18)], $G^T=G_0=Z_{18}$, the generator  (1/2, 1/3, 1/9, 1/18).

There are $4$ $G$-deformations,
(0, 0, 3, 12) (0, 0, 6, 6) (0, 1, 1, 10) (0, 1, 4, 4);

There are $14$ $G^T$-deformations,
(0, 0, 1, 16) (0, 0, 2, 14) (0, 0, 3, 12) (0, 0, 4, 10) 
(0, 0, 5, 8) (0, 0, 6, 6) (0, 0, 7, 4) (0, 1, 0, 12) 
(0, 1, 1, 10) (0, 1, 2, 8) (0, 1, 3, 6) (0, 1, 4, 4) 
(0, 1, 5, 2) (0, 1, 6, 0).

There are two $(G,G^T)$-pairs,
[(0, 1/3, 2/3, 0), (1/2, 0, 0, 1/2)] 
[(0, 2/3, 1/3, 0), (1/2, 0, 0, 1/2)];

and there are two  $(G^T,G)$-pairs,
[(1/2, 0, 0, 1/2), (0, 1/3, 2/3, 0)] 
[(1/2, 0, 0, 1/2), (0, 2/3, 1/3, 0)].

$\chi_{Vafa}(W_0/G) = 24$, $h^{1,1}=4 +14+2=20$.

\subsection{(2, 3, 8, 24)}

$A = (2, 3, 8, 24)$, $\mathbb P (12, 8, 3, 1)$. 

$G=G^{\max}$, $
|G| = 48$, the generators [(0, 0, 1/8, 7/8), (1/2, 1/3, 1/8, 1/24)], $G^T=G_0=Z_{24}$, the generator  (1/2, 1/3, 1/8, 1/24)]. 

There are $6$ $G$-deformations,
(0, 0, 2, 18) (0, 0, 4, 12) (0, 0, 6, 6) (0, 1, 0, 16) 
(0, 1, 2, 10) (0, 1, 4, 4).

There are $12$ $G^T$-deformations,
(0, 0, 1, 21) (0, 0, 2, 18) (0, 0, 3, 15) (0, 0, 4, 12) 
(0, 0, 5, 9) (0, 0, 6, 6) (0, 1, 0, 16) (0, 1, 1, 13) 
(0, 1, 2, 10) (0, 1, 3, 7) (0, 1, 4, 4) (0, 1, 5, 1).

There are two $(G,G^T)$-pairs,
[(1/2, 0, 1/2, 0), (0, 1/3, 0, 2/3)] 
[(1/2, 0, 1/2, 0), (0, 2/3, 0, 1/3)], and there are two $(G^T,G)$-pairs, 
[(0, 1/3, 0, 2/3), (1/2, 0, 1/2, 0)] 
[(0, 2/3, 0, 1/3), (1/2, 0, 1/2, 0)]. 

$\chi_{Vafa}(W_0/G) = 24$, $h^{1,1}=6 +12+2=20$.

\subsection{(2, 3, 7, 42)}
$A = (2, 3, 7, 42)$, $\mathbb P (21, 14, 6, 1)$. 
$G=G^{\max}$, $
|G| = 42$, the generator (1/2, 1/3, 1/7, 1/42), $G^T=G_0$, the generator  (1/2, 1/3, 1/7, 1/42).

There are $10$ $G$-deformations,
(0, 0, 1, 36) (0, 0, 2, 30) (0, 0, 3, 24) (0, 0, 4, 18) 
(0, 0, 5, 12) (0, 1, 0, 28) (0, 1, 1, 22) (0, 1, 2, 16) 
(0, 1, 3, 10) (0, 1, 4, 4).

There are the same $10$ $G^T$-deformations,
(0, 0, 1, 36) (0, 0, 2, 30) (0, 0, 3, 24) (0, 0, 4, 18) 
(0, 0, 5, 12) (0, 1, 0, 28) (0, 1, 1, 22) (0, 1, 2, 16) 
(0, 1, 3, 10) (0, 1, 4, 4).

There are no Roan's pairs. 

$\chi_{Vafa}(W_0/G) = 24$, $h^{1,1}=10 +10+0=20$.







\end{document}